\newcommand{\Rob}{\textsc{Rob}}
\newcommand{\rfs}{\mathbf{RFS}}
\newcommand{\dfs}{\mathbf{DFS}}
\newcommand{\Bal}{\textsc{Bal}}
\newcommand{\loadmax}{\widehat{\lambda}}
\def\E{\mathbb{E}}
\newtheorem{clm}[theorem]{Claim}
\begin{document}

\title{How (Not) to Shoot in Your Foot\\with SDN Local Fast Failover\\{\large A Load-Connectivity Tradeoff}}

\author{
Michael Borokhovich\inst{1,}\thanks{Michael Borokhovich was supported in part by the Israel Science Foundation (grant 894/09).}, Stefan Schmid\inst{2}
}

%\small $^1$ Ben Gurion University, Israel; $^2$ TU Berlin \& T-Labs, Germany

\institute{Ben-Gurion University of the Negev, Israel,  \email{borokhom@cse.bgu.ac.il}
\and
TU Berlin \& T-Labs, Germany,
\email{stefan@net.t-labs.tu-berlin.de}
}

%\institute{%
%\small $^1$ Ben Gurion University, Israel; \{avin,zvilo\}@cse.bgu.ac.il\\
%\small $^2$ MIT, USA; hauepler@mit.edu\\
%\small $^3$ University of Paderborn, Germany; scheideler@upd.de\\
%\small $^4$ TU Berlin \& Telekom Innovation Laboratories, Germany; stefan@net.t-labs.tu-berlin.de\\
%}

%\date{}

\maketitle
%\thispagestyle{empty}

%\sloppy

\begin{abstract}
{
This paper studies the resilient routing and (in-band) fast failover mechanisms supported in Software-Defined Networks (SDN).
We analyze the potential benefits and limitations of such failover mechanisms, and focus on two main metrics: (1) \emph{correctness} (in terms of connectivity and loop-freeness) and (2) \emph{load-balancing}. We make the following contributions.
First, we show that in the \emph{worst-case} (i.e., under adversarial link failures), the usefulness of local failover is rather limited: already a small number of failures will violate connectivity properties under \emph{any} fast failover policy, even though the underlying substrate network remains highly connected. We then present randomized and deterministic algorithms to compute resilient forwarding sets; these algorithms achieve an almost optimal tradeoff.
 Our worst-case analysis is complemented with a simulation study.
}
\end{abstract}

\section{Introduction}

The \emph{software-defined networking (SDN)} paradigm separates the control plane from the network data plane, and introduces a (software) \emph{controller} that manages the \emph{flows} in the network from a (logically) centralized perspective. This architecture has the potential to make the network management and operation more flexible and simpler, and to enable faster innovation also in the network core. For example, the controller may exploit application and network state information (including the switches under its control) to optimize the routing of the flows through the network, e.g., to implement isolation properties or improve performance.

However, the separation of the control from the data plane may have drawbacks. For example, a reactive flow control can introduce higher latencies due to the interaction of the switch with the remote controller. Moreover, the separation raises the question of what happens if the switches lose connectivity to the controller. One solution to mitigate these problems is to keep certain functionality closer to the switches or in the data plane~\cite{kandoo}.

An important tradeoff occurs in the context of network failures: Theoretically, e.g., a link failure, is best handled by the controller which has the logic to update forwarding rules according to the current network policies. However, as the indirection via the controller may take too long, modern network designs incorporate failover (or ``backup'') paths into the (switches' or routers') \emph{forwarding tables}. For example, OpenFlow (since the 1.1 specification~\cite{oneone}), incorporates such a fast failover mechanism: it allows to predefine resilient and in-band failover routes which kick in upon a topological change. Only after the failover took place, the controller may learn about the new situation and install forwarding (and failover) rules accordingly.

\textbf{Our Contributions.} Given that the in-band failover tables need to be pre-computed and the corresponding rules are based on limited local network information only, we ask the question: ``Can you shoot in your foot with local fast failover?''
We formalize a simplified local failover problem, and first assume a conservative (or worst-case) perspective where
link failures are chosen by an adversary who knows the entire network and all the pre-installed failover rules.
For this setting, we provide a lower bound which shows that a safe fast failover can potentially come at a high network load, especially
if the failover rules are destination-based only (Section~\ref{sec:worstcase}).
We then present randomized and deterministic algorithms to pre-compute resilient forwarding sets and show that the algorithms are (almost) optimal in the sense that they match the
 lower bound mentioned above (Section~\ref{sec:algorithms}).
 %Our results also show that it is important that failover rules make use of larger parts of the header space, in the sense that rules based on destination addresses only %are bound to yield inherently higher network loads compared to rules which, e.g., also include source addresses.
 Finally, we report on a simulation study (Section~\ref{sec:sims}) which indicates that under random link failures,
 local fast failover performs better in general.
%In the appendix, we give the formal specification of the two additional algorithms used in our simulations, and we extend the discussion to alternative adversary and traffic models; it will be published as a technical report accompanying this paper.
In the Appendix, we give the formal specification of the two additional algorithms used in our simulations, and we extend the discussion to alternative adversary and traffic models.

%\subsection{Paper Organization}

%The remainder of this paper is organized as follows. Section~\ref{sec:model} provides some relevant background information on SDN, fast failover, and our model. %Subsequently, we derive our lower bounds (Section~\ref{sec:worstcase}) and present almost optimal algorithms for fast failover (Section~\ref{sec:algorithms}).
% %We discuss controller-based failover in Section~\ref{sec:global} and
% Our simulation results are discussed in Section~\ref{sec:sims}. After reviewing related work in Section~\ref{sec:relwork}, we conclude our work %(Section~\ref{sec:conclusion}).

\textbf{Model and Terminology.}
We attend to the following model. We assume an SDN-network $G=(V,E)$ with $n$ switches (or \emph{nodes}) $V=\{v_1,\ldots,v_n\}$ (e.g., OpenFlow switches) connected by bidirectional links $E$. We assume that all nodes are directly connected, i.e., $G$ forms a full mesh (a \emph{clique}). This network serves an \emph{all-to-one} communication pattern where any node $v_i\in V\setminus \{v_n\}$ communicates with a single destination $v_n$; in other words, we have $n-1$ communicating (source-destination) pairs. Henceforth, by slightly abusing terminology, we will refer to the corresponding $n-1$ communication paths as the \emph{flows} $\mathcal{F}=\{f_1,\ldots,f_{n-1}\}$. The source-destination flows are unsplittable, i.e., each flow $f_i~~\forall i\in \{1,\ldots,k\}$ travels along a single path. For simplicity, we will assume that all flows $f_i$ carry a constant amount of traffic $w=w(f_i)$, and that edge capacities $e\in E$ are infinite.

%We will assume that the SDN network is theoretically able to exploit all possible links and paths on $G$, i.e., it does not rely on an underlying spanning tree (e.g., %see also~\cite{vl2,portland,spain}).
In order to ensure an efficient failover, each switch $v\in V$ can store the following kind of \emph{failover rules}: Each rule $r\in R$ considers a specific local failure scenario, namely the set of failed incident links, and defines an alternative forwarding port for each source-destination pair. (This is slightly more general than what is provided e.g., by OpenFlow today: in OpenFlow, all paths need to resort to the same failover port, rending the connectivity-load tradeoff even worse.)

Formally, let $\Gamma(v)~~ \forall v\in V$ denote the links (or equivalently: the \emph{switch ports}) incident to node $v$ in $G$, and let $\texttt{FW}(v)$ define how the source-destination pairs (or flows) that are routed via node $v$ (the ``forwarding set''). A rule $r$ is of the form:
$
r: ~~ \left(2^{\Gamma(v)}, \texttt{FW}(v) \right) \mapsto  \texttt{FW}(v),
$
that is, for each possible failure scenario $2^{\Gamma(v)}$ (i.e., the subset of ports which failed at $v$), the failover rule defines an alternative set of forwarding rules $\texttt{FW}(v)$ at $v$. Note that the number of rules can theoretically be large; however, as we will see, small rule tables are sufficient for the algorithms presented in this paper.

%Note that in this generality where each source-destination pair can be forwarded independently and for each possible failover, much state may have to be stored at switch %$v$, and more scalable rules are required in large networks. However, as we will see, even with this generality, the failover can be very inefficient in the worst-case, %but more scalable solutions perform well on average.

We study failover schemes that pursue two goals: (1) \emph{Correctness:} Each source-destination pair is connected by a valid path; there are no forwarding loops.
(2) \emph{Performance:} The resulting flow allocations are well balanced. Formally, we want to minimize the load of the maximally loaded link in $G$ after the failover:
    $
\min \max_{e\in E} \lambda(e)$,
 where $\lambda(e)$ describes the number of flows $f_i$ crossing edge $e$. Henceforth, let $\widehat{\lambda}=\max_{e\in E} \lambda(e)$ denote the maximum load.

For our randomized failover schemes, we will typically state our results \emph{with high probability} (short: \emph{w.h.p.}): this means that the corresponding claim holds with at least polynomial probability $1-1/n^c$ for an arbitrary constant $c$. Moreover, throughout this paper, $\log$ will refer to the \emph{binary} logarithm.

%\vspace{-3mm}
\section{You must shoot in your foot!}\label{sec:worstcase}
%\vspace{-3mm}

Let us first investigate the limitations of local failover mechanisms from a conservative worst-case perspective. Concretely, we will show that even in a fully meshed network (i.e., a \emph{clique}), a small number of link failures can either quickly disrupt connectivity (i.e., the forwarding path of at least one source-destination pair is incorrect), or entail a high load. This is true even though the remaining physical network is still well connected: the minimum edge cut (short: \emph{mincut}) is high, and there still exist many disjoint paths connecting each source-destination pair.
\begin{theorem}\label{thm:worstcase}
No local failover scheme can tolerate $n-1$ or more link failures without disconnecting source-destination pairs, even though the remaining graph (i.e., after the link failures) is still $\left\lfloor \tfrac{n}{2}\right\rfloor -1$-connected.
\end{theorem}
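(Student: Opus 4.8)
The plan is to fix an arbitrary local failover scheme and, using the fact that its rule tables are known to the adversary, exhibit a set $F$ of exactly $n-1$ link failures that simultaneously (i) forces the flow $f_1$ originating at $v_1$ into a forwarding loop, and (ii) leaves $G-F$ at least $\floor{n/2}-1$-connected. I would trace $f_1$ hop by hop, revealing failures only as they are needed to steer the packet away from $v_n$, charging one failure to each newly visited node. The single structural fact I rely on is that at a node whose incident links are all intact the scheme must eventually push the packet toward $v_n$; hence, to keep the packet inside $V\setminus\{v_n\}$, I am forced to cut that node's direct link to $v_n$ on its first visit.

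The construction has two phases. In the \emph{advance} phase I repeatedly cut the current node's link to $v_n$; with only this link down, the node forwards along some live port to a next node $u_{i+1}$. If $u_{i+1}$ has already been visited the flow loops and I stop early (with even fewer, purely $v_n$-incident, failures); otherwise I continue, charging one failure per distinct visited node, and run this for $m=\ceil{n/2}-1$ steps, collecting $U=\{u_1=v_1,\dots,u_{m+1}\}$. In the \emph{trap} phase, at the boundary node $u_{m+1}$ I additionally cut all its links to nodes outside $U$ (the not-yet-visited nodes) together with its link to $v_n$; its only surviving ports then lead back into $U$, so whatever the scheme does the packet re-enters $U$. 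Since every node of $U$ has had its link to $v_n$ cut, the packet can never be handed to $v_n$ and loops forever inside $U$, disconnecting $f_1$. Counting, the $v_n$-incident cuts and the boundary cuts contribute $(m+1)+(n-m-2)=n-1$ failures in total.

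For connectivity, observe that $F$ is the union of two stars: $\ceil{n/2}$ links at $v_n$ (to $u_1,\dots,u_{m+1}$) and $\floor{n/2}-1$ further links at $u_{m+1}$ (to the nodes outside $U$). The split point $m=\ceil{n/2}-1$ is chosen precisely so that both star centers keep large degree — $v_n$ retains $\floor{n/2}-1$ neighbors and $u_{m+1}$ retains $\ceil{n/2}-1$ — while every other vertex loses at most one incident link. Since each failed edge touches $v_n$ or $u_{m+1}$, any two remaining vertices stay adjacent, so $G-F$ induced on $V\setminus\{v_n,u_{m+1}\}$ is complete; a short argument then shows that any cut of size below $\floor{n/2}-1$ would have to isolate $v_n$ or $u_{m+1}$, which their degrees forbid, giving $\floor{n/2}-1$-connectivity. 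The main obstacle is exactly this balancing act: naively cutting the link to $v_n$ at every hop forces a loop only after isolating $v_n$ and thereby destroys connectivity, so the key idea is the boundary trap, which short-circuits a potentially long forced walk at a single node by spending the leftover budget there, paired with the choice of $m$ that equalizes the two degree deficits at $v_n$ and $u_{m+1}$. The remaining steps are routine: verifying that the trapped packet cannot leak back to $v_n$ (every node of $U$ has its $v_n$-link cut, and intermediate nodes see only that single incident failure, so their next hops stay in $U$) and confirming the connectivity bound.
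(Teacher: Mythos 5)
Your proof is correct and follows essentially the same approach as the paper's: an advance phase that traces the flow $v_1\rightarrow v_n$ and fails each visited node's direct link to $v_n$, followed by a trap at the last visited node that cuts its links to all unvisited nodes so the packet is forced back into the visited set (where determinism of the local rules makes a loop inevitable), with the same $n-1$ failure count and the same degree/mincut analysis. The differences are only cosmetic — your split $|U|=\lceil n/2\rceil$ versus the paper's $\lfloor n/2\rfloor$ path nodes, and which phase accounts for the link $(u_{m+1},v_n)$ — and your explicit observation that each intermediate node repeats its forwarding decision because its local failure view is unchanged is precisely the implicit step behind the paper's ``a loop is inevitable.''
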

\begin{proof}
We consider a physical network that is fully meshed, and we assume a traffic matrix where all nodes communicate with a single destination $v_n$.
To prove our claim, we will construct a set of links failures that creates a loop, for any local failover scheme. Consider a flow $v_1 \rightarrow v_n$ connecting
the source-destination pair $(v_1,v_n)$. The idea is that whenever the flow from $v_1$ would be directly forwarded to $v_n$ in the absence of failures,
 we fail the corresponding physical link: that is, if $v_1$ would directly forward to $v_n$, we fail $(v_1,v_n)$. Similarly, if $v_1$ forwards to (the backup) node $v_i$, and if $v_i$ would send to $v_n$, we fail $(v_i,v_n)$, etc.
We do so until the number of intermediate (backup) nodes for the flow $v_1\rightarrow v_n$ becomes $\left\lfloor \frac{n}{2}\right\rfloor -1$. This will require at most $\left\lfloor \frac{n}{2}\right\rfloor -1$ failures (of links to $v_n$) since every such failure adds at least one intermediate node.

In the following, let us assume that the last link on the path $v_1\rightarrow v_n$ is $(v_k,v_n)$.
We simultaneously fail all the links $(v_k,v_{*})$, where $v_{*}$ are all the nodes that are not the intermediate nodes on the path $v_1\rightarrow v_n$, and not $v_1$.
So, there are $n-\left(\left\lfloor \frac{n}{2}\right\rfloor - 1\right) -1$ nodes $v_{*}$ (the last minus $1$ accounts for $v_1$).
By failing the links to $v_{*}$, we left $v_k$ without a valid routing choice: All the remaining links from $v_k$ point to nodes which are already on the path $v_1\rightarrow v_n$, and a loop is inevitable.

%\begin{figure}[t]
%\centering
%\includegraphics[width=.22\columnwidth]{example_big2}~~
%\includegraphics[width=.22\columnwidth]{example_big3}~~~
%\includegraphics[width=.22\columnwidth]{example_big4}~~
%\includegraphics[width=.22\columnwidth]{example_big5}\\
%\caption{\emph{From left to right:} failover path $(v_1\rightarrow v_6)$ where each time the last hop to $v_6$ is failed. In the rightmost figure, the link $(v_5,v_3)$ is failed in addition to $(v_5,v_6)$, thus the loop is inevitable.}\label{fig:failover_example}
%\end{figure}

In total, we have at most $\left\lfloor \frac{n}{2}\right\rfloor -1 + n-\left(\left\lfloor \frac{n}{2}\right\rfloor - 1\right) -1=n-1$ failures.
Notice, that the two nodes with the smallest degrees in the graph are the nodes $v_n$ and $v_k$. The latter is true since the first $\left\lfloor \frac{n}{2}\right\rfloor -1$ failures were used to disconnect links to $v_n$, and another $n-\left\lfloor \frac{n}{2}\right\rfloor$ failures were used to disconnect links from $v_k$.
Formally, $d(v_n)=(n-1)-\left(\left\lfloor \frac{n}{2}\right\rfloor -1\right)-1$, where the last minus $1$ accounts for the link $(v_k,v_n)$. So, $d(v_n)=n-1-\left\lfloor \frac{n}{2}\right\rfloor \ge \tfrac{n}{2}-1$. And regarding the degree of $v_k$, we have: $d(v_k)=(n-1)-\left(n- \left\lfloor \frac{n}{2}\right\rfloor\right)=\left\lfloor \frac{n}{2}\right\rfloor -1$.
All the other nodes have a degree of $n-2$.

The network is still $\left\lfloor \frac{n}{2}\right\rfloor - 1$ connected: the mincut of the network is at least $\left\lfloor \frac{n}{2}\right\rfloor - 1$. Consider some cut with $k$ nodes on the one side of the cut, and $n-k$ nodes on the other side. Obviously, one of the sets has a size of at most $\left\lfloor\frac{n}{2}\right\rfloor$; let us denote this smaller set by $S$. If $S$ includes at least one of the nodes $V\setminus \{v_k,v_n\}$, then the number of outgoing edges form the set is at least $n-2-(|S|-1)$, thus the mincut is at least $\frac{n}{2} - 1$. If $S$ includes only both $v_k$ and $v_n$, the mincut is at least $n-1$ (the link $(v_k,v_n)$ was failed). If only one of the nodes $\{v_k,v_n\}$ is in $S$, then the mincut is at least $\left\lfloor \frac{n}{2}\right\rfloor - 1$.
\hfill $\Box$\end{proof}

Regarding the maximal link load, we have the following lower bound.
\begin{theorem}\label{thm:lowerbound}
For any local failover scheme tolerating $\varphi$ link failures $(0<\varphi<n)$ without disconnecting any source-destination pair, there exists a failure scenario which results in a link load of at least $\loadmax\ge\sqrt{\varphi}$, although the minimum edge cut (mincut) of the network is still at least $n-\varphi-1$.
\end{theorem}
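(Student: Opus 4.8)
The plan is to fix an arbitrary local failover scheme and, knowing its pre-installed rules (each node's next hop for each flow is a fixed function of its local failure pattern), exhibit a set of at most $\varphi$ link failures whose induced routing piles $\sqrt{\varphi}$ flows onto a single link. Since the rules are committed in advance, the adversary's job is to pick the \emph{global} failure set that makes the committed routing concentrate rather than spread. I would split the budget into a \emph{depth} $t=\lceil\sqrt{\varphi}\rceil$ and a \emph{width} of about $t$, so that $t^2\le\varphi$, and aim for a single link carrying $\loadmax\ge t\ge\sqrt{\varphi}$ flows.

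First I would grow a detour path for one flow $f_1$ exactly as in the proof of Theorem~\ref{thm:worstcase}: starting at $v_1$, whenever the current head $p_j$ of the path would forward $f_1$ directly to $v_n$, fail the link $(p_j,v_n)$; each such failure lengthens the path by at least one node. Carrying this out for $t$ steps produces a path $v_1=p_0\to p_1\to\cdots\to p_t$ whose interior nodes $p_0,\dots,p_{t-1}$ have all lost their direct link to $v_n$, at a cost of $t\le\sqrt{\varphi}$ failures, each incident to $v_n$. The decisive second step concerns the \emph{own} flows $f_{p_0},\dots,f_{p_{t-1}}$ of these interior nodes: each has had its direct link to $v_n$ removed, so each must detour as well, and I would argue that within a further budget of about $t$ failures at least $t$ of them can be driven onto one common link, yielding $\loadmax\ge t\ge\sqrt{\varphi}$. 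Finally I would do the accounting: at most $t+O(t)\le\varphi$ links fail in total and, by construction, no node loses more than $\varphi$ of them, so by the same cut analysis as in Theorem~\ref{thm:worstcase} every cut of the remaining graph keeps at least $n-\varphi-1$ crossing links, giving mincut $\ge n-\varphi-1$.

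The hard part will be this second step, and it is precisely where the per-flow generality of the scheme bites. Because a rule may route each source--destination pair independently, an interior node $p_j$ may send its own flow $f_{p_j}$ to a next hop different from the one it gives $f_1$, so the deflected flows do \emph{not} bundle automatically; this is exactly why destination-based rules are worse, and why the guaranteed load is only $\sqrt{\varphi}$ rather than linear in $\varphi$. The crux is therefore to show that the scheme's freedom to ``spread'' these flows is sharply limited once their direct links to $v_n$ are gone: I would track, for each deflected flow, the link by which it re-enters the neighbourhood of $v_n$, and use a pigeonhole over the few re-entry links that the fixed rules can actually select under a local failure pattern realizable within budget, so as to force $\sqrt{\varphi}$ of the flows onto a single link. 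Quantifying these ``admissible re-entry links'' against the failure budget --- i.e.\ proving that steering or concentrating the deflected flows never costs more than about $\sqrt{\varphi}$ extra failures --- is the technical heart of the argument and the point at which the depth-times-width balance $t^2\le\varphi$ is consumed.
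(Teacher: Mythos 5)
There is a genuine gap, and it sits exactly where you placed it: the ``second step.'' Your first phase (growing one detour path as in Theorem~\ref{thm:worstcase}) actively hurts you, because it restricts the pool of candidate flows to the $t\approx\sqrt{\varphi}$ interior nodes $p_0,\dots,p_{t-1}$ of that single path. For such a small pool no pigeonhole is available: each of these $t$ flows has its own committed sequence of potential last hops before $v_n$, and since $t\cdot(\sqrt{\varphi}+1)\approx\varphi<n$, a well-designed scheme can make these sequences pairwise disjoint in their first $\sqrt{\varphi}$ entries. Then there simply is no common ``re-entry link'' that all (or even two) of the deflected flows can be steered onto within any sub-linear budget, and your tracking-plus-pigeonhole idea has nothing to bite on. Worse, the budget you state --- load $t\ge\sqrt{\varphi}$ using only $t+O(t)=O(\sqrt{\varphi})$ failures in total --- is not merely unproven but false: the paper's own Theorem~\ref{thm:rand_strong_adv} exhibits a scheme ($\rfs$) against which creating load $\sqrt{\varphi}$ requires $\Omega\left(\frac{\varphi}{\log n}\right)$ failures w.h.p., which exceeds $O(\sqrt{\varphi})$ as soon as $\varphi\gg\log^2 n$. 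So no argument can close your second step as budgeted; the cost of steering a single flow onto a prescribed last hop is genuinely up to $\sqrt{\varphi}$ failures, not $O(1)$, which is why the correct accounting is $\sqrt{\varphi}\times\sqrt{\varphi}=\varphi$.

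The paper's proof avoids the detour path entirely and applies the pigeonhole globally. For \emph{every} source $v_i$, loop-freeness of a scheme tolerating $\varphi$ failures forces a sequence of $\varphi$ \emph{distinct} potential last hops $v_i^1,\dots,v_i^{\varphi}$ (the last hop must change each time its link to $v_n$ is failed). Taking $A_i=\{v_i,v_i^1,\dots,v_i^{\sqrt{\varphi}}\}$ for all $n-1$ sources gives a multiset of $(n-1)(\sqrt{\varphi}+1)$ elements drawn from only $n-1$ nodes, so some node $w$ lies in at least $\sqrt{\varphi}$ of the sets $A_i$; steering each such flow so that its last hop becomes $w$ costs at most $\sqrt{\varphi}$ failures of links into $v_n$, for a total of at most $\varphi$ failures and load at least $\sqrt{\varphi}$ on the single link $(w,v_n)$. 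Your ``admissible re-entry links'' intuition is precisely this last-hop argument --- but it only works when quantified over all $n-1$ sources, not over the $\sqrt{\varphi}$ flows your path construction leaves you with. Your mincut accounting (each failure reduces any cut by at most one, hence mincut $\ge n-\varphi-1$) is fine and matches the paper.
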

\begin{proof}
\emph{Let us first describe an adversarial strategy that induces a high load:} Recall that in the absence of failures, each node $v_i$ ($i\neq n$) may use its direct link to $v_n$ for forwarding. However,
 after some links failed, $v_i$ may need to resort to the remaining (longer) paths from $v_i$ to $v_n$.
Since the failover scheme $\mathcal{S}$ tolerates $\varphi$ failures and $v_i$ remains connected to $v_n$,
$\mathcal{S}$ will fail over to one of $\varphi+1$ possible paths. To see this, let $v_i^j$ ($j\in[1,\ldots ,\varphi]$) be one of the $\varphi$ possible last hops on the path $(v_i\rightarrow\cdots\rightarrow v_i^j \rightarrow v_n)$, and let us consider the paths generated by $\mathcal{S}$:
\begin{align*}
&(v_i\rightarrow v_n),\\
&(v_i\rightarrow\cdots\rightarrow v_i^1 \rightarrow v_n),\\
&(v_i\rightarrow\cdots\rightarrow v_i^1 \rightarrow\cdots\rightarrow v_i^2\rightarrow v_n),\\
%&(v_i\rightarrow v_i^1\rightarrow v_i^2\rightarrow v_n),\\
&\ldots\\
&(v_i\rightarrow \cdots\rightarrow v_i^1\rightarrow\cdots\rightarrow v_i^2\rightarrow\cdots\rightarrow v_i^\varphi\rightarrow v_n).
\end{align*}
For example, the path $(v_i\rightarrow\cdots\rightarrow v_i^1 \rightarrow v_n)$ will be generated if the first failure is link $(v_i,v_n)$, and the path $(v_i\rightarrow\cdots\rightarrow v_i^1 \rightarrow\cdots\rightarrow v_i^2\rightarrow v_n)$ if the second failure is link $(v_i^1,v_n)$ (see Fig.~\ref{fig:failover_example} for an illustration). Notice that the last hop $v_i^j$ is unique for every path; otherwise, the loop-freeness property would be violated.

\begin{figure}[t]
\centering
\includegraphics[width=.24\columnwidth]{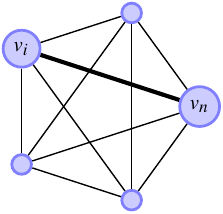}
\includegraphics[width=.24\columnwidth]{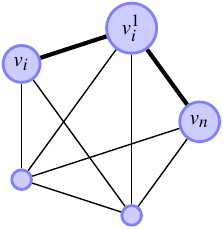}
\includegraphics[width=.24\columnwidth]{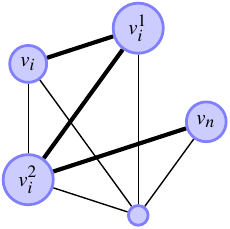}
\includegraphics[width=.24\columnwidth]{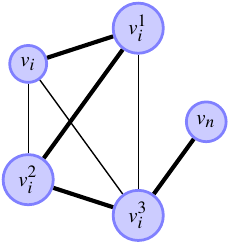}\\
\caption{\emph{From left to right:} failover path $(v_i\rightarrow v_n)$ where each time the last hop to $v_n$ is failed.}\label{fig:failover_example}
\end{figure}

For each $i\in [1,\ldots,n-1]$ (i.e., for each possible source) consider the set $A_i=\{v_i,v_i^1,\ldots,v_i^{\sqrt{\varphi}}\}$, and accordingly, the multiset $\bigcup_i A_i$ is of size $\left|\bigcup_i A_i\right| = (n-1)(\sqrt{\varphi}+1)$ many nodes. Since we have $n-1$ distinct nodes (we do not count $v_n$), by a counting argument, there exists a node $w\in \bigcup_i A_i$ which appears in at least $\sqrt{\varphi}$ sets $A_i$.

If for each $i$ such that $w\in A_i$, the adversary will cause $v_i$ to route to $v_n$ via $w$, then the load of the link $(w,v_n)$ will be at least $\sqrt{\varphi}$. This can be achieved by failing at most $\sqrt{\varphi}$ links to $v_n$ in each such set $A_i$. Thus, the adversary will fail $\sqrt{\varphi}\times \sqrt{\varphi}=\varphi$ links incident to $v_n$, while the maximum loaded link $(w,v_n)$ will have a load of at least $\sqrt{\varphi}$.

\emph{It remains to prove that the network remains highly connected, despite these failures:} The proof is simple. In a clique network without failures, the mincut is $n-1$. In the worst case, each link failure will remove one link from some cut, and hence the mincut must eventually be at least $n-\varphi-1$. By the same argument, there are at least $n-\varphi-1$ many disjoint paths from each node $v_i$ to the destination: initially, without failures, there are $n-1$ disjoint paths (a direct one and $n-1$ indirect ones), and each failure affects at most one path.
\hfill $\Box$\end{proof}

Interestingly, it can be proved analogously that if a failover rule only depends on destination addresses, the situation is even worse.
\begin{theorem}\label{thm:dst-based}
Consider any local destination-based failover scheme in a clique graph. There exists a set of $\varphi$ failures $(0<\varphi<n)$, such that the remaining graph will have a mincut of $n-\varphi-1$ and
$\loadmax\ge \varphi.$
\end{theorem}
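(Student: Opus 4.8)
The plan is to exploit the defining feature of a destination-based scheme: at any node $v\neq v_n$, the forwarding port for destination $v_n$ is a function of $v$ and of the set of \emph{failed incident links of $v$ only} --- it cannot depend on which source a flow came from. As in Theorem~\ref{thm:lowerbound}, I would let the adversary fail only links incident to $v_n$. Under this restriction the only incident link of a node $v\neq v_n$ that can fail is its direct link $(v,v_n)$, so each such node has exactly two relevant behaviours: a \emph{default} next hop $n_0(v)$ used when no incident link is down, and a \emph{backup} next hop $b(v)\in V\setminus\{v,v_n\}$ used when $(v,v_n)$ is down. The crucial consequence is that once I fix a failed set $D$ of $v_n$-incident links, \emph{every} flow visiting a node $v$ leaves it along the same edge --- $b(v)$ if $(v,v_n)\in D$ and $n_0(v)$ otherwise --- so any flows that ever meet at a node travel together from then on. This collapsing is precisely what the source-dependent scheme of Theorem~\ref{thm:lowerbound} could avoid.

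The heart of the argument is to build a long ``backup chain'' and fail exactly its direct links. First I would show that the functional graph of $b$ has no cycle of length $\le\varphi$: if $c_1\to c_2\to\cdots\to c_\ell\to c_1$ were such a cycle with $\ell\le\varphi$, then failing the $\ell$ direct links $(c_1,v_n),\ldots,(c_\ell,v_n)$ puts every $c_j$ into backup mode and produces the forwarding loop $c_1\to c_2\to\cdots\to c_\ell\to c_1$, disconnecting the pair $(c_1,v_n)$ with at most $\varphi$ failures and contradicting the hypothesis that the scheme tolerates $\varphi$ failures. Hence iterating $b$ from any start node visits at least $\varphi+1$ distinct nodes before it can repeat, and I can select distinct nodes $p_1,p_2,\ldots,p_{\varphi+1}$ with $b(p_i)=p_{i+1}$ for $1\le i\le\varphi$.

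I would then let the adversary fail precisely the $\varphi$ direct links $D=\{(p_i,v_n):1\le i\le\varphi\}$. Each node $p_i$ with $i\le\varphi$ is now in backup mode and forwards its own flow (every non-destination node is a source) to $b(p_i)=p_{i+1}$; following these forced hops, the flow originating at $p_i$ traverses $p_i\to p_{i+1}\to\cdots\to p_\varphi\to p_{\varphi+1}=:q$. Since $q=p_{\varphi+1}\notin D$, node $q$ is in default mode, so by the destination-based property all of these $\varphi$ flows leave $q$ along the \emph{single} edge $(q,n_0(q))$, giving $\loadmax\ge\varphi$. The connectivity claim is identical to Theorem~\ref{thm:lowerbound}: the clique starts with mincut $n-1$ and each of the $\varphi$ failures removes at most one edge from any cut, so the mincut stays at least $n-\varphi-1$.

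The step I expect to be the main obstacle is the no-short-cycle claim, i.e.\ converting ``the scheme stays loop-free under $\varphi$ failures'' into ``$b$ has no cycle of length $\le\varphi$,'' since this is exactly what guarantees that $\varphi+1$ distinct nodes are available for the chain (and it forces $\varphi\le n-2$, consistent with Theorem~\ref{thm:worstcase}). One also has to verify that restricting the adversary to $v_n$-incident failures loses nothing, namely that each $p_i$ then sees exactly the single local failure $(p_i,v_n)$ that defines $b(p_i)$, so the backup behaviour is the one used in the cycle argument.
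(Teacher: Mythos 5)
Your proposal is correct and takes essentially the same route as the paper's proof: fail only links into $v_n$ along the scheme's induced backup chain, then invoke the destination-based property to force all flows reaching the chain onto a single outgoing edge of the final hop, with the mincut claim inherited from the argument of Theorem~\ref{thm:lowerbound}. The difference is one of rigor rather than substance: the paper fails the successive last-hop links adaptively and leaves the distinctness of the intermediate nodes (i.e., loop-freeness of the rerouted path) implicit, whereas you precompute the chain via the functional graph of $b$ and justify distinctness with an explicit no-short-cycle lemma.
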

\begin{proof}
In order to construct a bad example, we first fail the direct link $(v_1,v_n)$, and hence $v_1$ will need to reroute to some path with the last node before $v_n$ being some node $v_i$.
When we fail the link $(v_i,v_n)$, $v_i$ will have to reroute and some other node $v_j$ will become the last hop on the path to $v_n$. We repeat this strategy
to fail the links from the newly selected last hop and the destination $v_n$. This results in a routing path $v_1\rightarrow\cdots\rightarrow v_i\rightarrow\cdots\rightarrow v_j \rightarrow\cdots\rightarrow w\rightarrow v_n$ with at least $\varphi$ intermediate nodes.
Since the algorithm is destination-based, i.e., forwarding rules depend only on the destination address of a packet, the load on the link $(w,v_n)$ will be at least $\varphi+1$: all the nodes on the path $v_1\rightarrow v_n$ will send their packets via the same route.
\hfill $\Box$\end{proof}

%\vspace{-3mm}
\section{How not to shoot in your foot!}\label{sec:algorithms}
%\vspace{-3mm}

 We have seen that what can be achieved with local fast failover is rather limited. On the positive side, this section shows that there exist algorithms
 to pre-compute failover schemes which at least match the derived lower bounds: we present algorithms to pre-compute robust failover paths that \emph{jointly optimize} the loop-freeness property and the load, i.e., find an almost optimal tradeoff.

 Naturally, randomization can help to spread the communication load well, but we must ensure that paths remain loop-free. We first present such a randomized solution, discuss how to derandomize it, and finally look at deterministic failover algorithms.

We introduce a family of failover schemes $\mathcal{S}$ which can be represented in a generic \emph{matrix form} $\delta_{i,j}$.
Any failover scheme instance in this family will always forward a message directly to the destination if the corresponding link is available.
Otherwise, if a given node $v_i$ cannot reach the destination $v_n$ via $(v_i,v_n)$, it will resort to the sequence of alternatives
represented as the row $i$ in the matrix $\delta_{i,\cdot}$ (the ``backup nodes'' for $v_i$): $v_i$ will first try to forward to node $\delta_{i,1}$, if this link is not available
to node $\delta_{i,2}$, and so on. Similarly and more generally, starting from node $\delta_{i,j}$, if the link $(\delta_{i,j},v_n)$ is not
available, the failover scheme will try $\delta_{i,j+1}$, $\delta_{i,j+2}$, etc.
In summary, the matrix representation can be depicted as follows:
\begin{align*}
\delta_{1,1},\delta_{1,2},\ldots ,\delta_{1,n-2}\\
\ldots\\
\delta_{i,1},\delta_{i,2},\ldots ,\delta_{i,n-2}\\
\ldots\\
\delta_{n-1,1},\delta_{n-1,2},\ldots ,\delta_{n-1,n-2}\\
\end{align*}

The following auxiliary claim characterizes the best adversarial strategy against the failover schemes $\mathcal{S}$.
\begin{clm}\label{claim:best_adv_strategy}
For the family of failover schemes $\mathcal{S}$, the highest load is induced if links towards the destination node $v_n$ are failed.
\end{clm}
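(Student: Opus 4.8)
The plan is to prove the claim by a normalization (exchange) argument: I will show that any failure set $F$ can be transformed into a set $F'$ that fails \emph{only} links incident to the destination $v_n$, with $|F'|\le|F|$ and $\loadmax(F')\ge\loadmax(F)$. Since $\lambda(e)$ counts the flows crossing $e$ and every one of the $n-1$ flows in $\mathcal{F}$ terminates at $v_n$, each flow contributes to exactly one ``last-hop'' link $(w,v_n)$, and the adversary's whole task is to funnel as many flows as possible onto a common bottleneck. Establishing such an $F'$ shows that the highest load is always attainable by failing links towards $v_n$, which is exactly the claim.

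The structural fact I would exploit is that in scheme $\mathcal{S}$ the forwarding decision of a flow at any node is triggered \emph{only} by the unavailability of that node's link to $v_n$: a flow sitting at $\delta_{i,j}$ advances to $\delta_{i,j+1}$ precisely when $(\delta_{i,j},v_n)$ is down, so its list of exit candidates $v_i,\delta_{i,1},\delta_{i,2},\dots$ is fixed and a failure steers the flow only by deleting candidates. The key step is then a domination lemma: a non-$v_n$ failure $(a,b)$ with $b\neq v_n$ is dominated by the $v_n$-failure $(b,v_n)$. Setting $F'=(F\setminus\{(a,b)\})\cup\{(b,v_n)\}$, every flow that skipped $b$ because $(a,b)$ was down now traverses $a\to b$, finds $(b,v_n)$ down, and falls through to the \emph{same} successor $\delta_{i,j+2}$ it reached before, so its exit node is unchanged; and failing $(b,v_n)$ additionally forces every \emph{other} flow reaching $b$ to fall through, weakly increasing the number of flows funneled onto shared successors. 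Iterating strictly decreases the number of non-$v_n$ failures, so the procedure terminates at a $v_n$-only set.

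The main obstacle is the bookkeeping that guarantees a single replacement never lowers the current maximum. The subtle point is that the replacement not only empties $(b,v_n)$ but also \emph{shifts} the skipping flows off their old hop $(a,\delta_{i,j+2})$ onto the pair $(a,b),(b,\delta_{i,j+2})$; if the maximal link happens to be such a ``shifted-off'' intermediate link, one step can momentarily reduce $\loadmax$. I would resolve this by choosing the move according to the current bottleneck: when the bottleneck is an intermediate link the shifted flows remain mutually bundled on $(b,\delta_{i,j+2})$ and one argues no net loss, and when the bottleneck is already a $v_n$-link $(b,v_n)$ I would instead simply \emph{restore} $(a,b)$, which (since $(b,v_n)$ is then up) makes the freed flows exit at $b$ and can only raise the load of the already-maximal link. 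Showing that these two moves cover all cases, are monotone in $\loadmax$, and always make progress is the technical heart of the proof; once in place, it produces a $v_n$-only adversary matching the best arbitrary adversary and settles the claim.
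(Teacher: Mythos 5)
Your overall strategy---normalizing an arbitrary failure set $F$ into a destination-only set $F'$ with $|F'|\le|F|$ and $\loadmax(F')\ge\loadmax(F)$ via repeated local exchanges---is a legitimate formalization of the claim and genuinely different from the paper's argument, but as written it has a real gap, which you flag yourself and then do not close: the two moves are not monotone in $\loadmax$, and the two-case analysis does not cover all configurations. Concretely, suppose the current maximum sits on a destination link $(c,v_n)$ with $c\notin\{a,b\}$. Your \emph{restore} move (delete $(a,b)$, add nothing, applicable when $(b,v_n)$ is up) lets the flows that used to skip $b$ pass through $b$ and exit there; any of them that previously exited at $c$ are drained off the bottleneck, so $\loadmax$ can strictly drop. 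Your \emph{replace} move fails in the same configuration for a different reason: links are bidirectional, so the failure $(a,b)$ also blocks flows whose sequences visit $b$ \emph{before} $a$. If $(b,v_n)$ was already failed in $F$ (so the swap merely deletes $(a,b)$), such a flow skipped $a$ in $F$ and exited downstream---possibly at $c$---whereas in $F'$ it proceeds $b\rightarrow a$ and, if $(a,v_n)$ is up, exits at $a$, again draining $(c,v_n)$. Your bookkeeping only tracks flows for which $b$ is the next backup after $a$, so the assertion that ``these two moves cover all cases'' is not merely unproven but false as stated. (A secondary imprecision: a flow at $\delta_{i,j}$ advances to $\delta_{i,j+1}$ only if, in addition to $(\delta_{i,j},v_n)$ being down, the link $(\delta_{i,j},\delta_{i,j+1})$ is up; any exchange argument must track both failure types and both traversal directions consistently.)

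Note that the paper sidesteps this bookkeeping entirely by not exchanging on a fixed failure set. It fixes the adversary's \emph{goal}---bring $\varphi$ flows to a common node $w$---and compares failure budgets: every failure advances a given flow by at most one position in its sequence, so bringing flow $i$ to $w$ at position $j$ costs at least $j$ failures, which failing the destination links of the prefix achieves exactly; and a destination failure $(u,v_n)$ is reused whenever the \emph{node} $u$ recurs in another $w$-prefix, whereas a non-destination failure is reused only when the same \emph{link} recurs, a strictly stronger requirement. If you want to salvage your approach, the workable route is to restrict attention to the flows crossing the maximally loaded link and compare failure counts for that fixed set of flows (essentially rederiving the paper's counting argument), rather than demanding that every local swap preserve the global maximum along the way.
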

\begin{proof}
To achieve a load of $\varphi$ on some link, the adversary first needs to bring at least $\varphi$ flows to some node $w$. Consider a failover sequence $\delta_{i,\cdot}$ in which $w$ is located at $j$'s position, i.e., $\delta_{i,j}=w$. In order to bring the flow $v_i \rightarrow v_n$ to node $w$, the adversary needs to fail at least $j$ links (every failure requires at most a single additional backup node). Thus, the adversary can remove the links to the destination from every node $\delta_{i,k}, k<j$ and from the source $v_i$. The optimality is due to the fact that once one of the \emph{nodes} $\delta_{i,k}, k<j$ appears in other sequences, these failures are automatically reused: the links $(\delta_{i,k},v_n)$ already failed. If the adversary would instead choose to fail other \emph{links} (not towards the destination), e.g., $(\delta_{i,j},\delta_{i,j+1})$ , the failures can only be reused if the same \emph{link} (and not only an endpoint) appears in other sequences before $w$. Therefore, we conclude that the strategy of failing the links to the destination is optimal: (1) it requires no more failures to bring a specific flow to $w$ than any other strategy, and (2) link failures to the destination can strictly be reused more often than the failures of links to any other nodes.
\hfill $\Box$\end{proof}

\subsection{Randomized Failover}

What does a good failover matrix $\delta_{i,j}$ look like? Naively, one may choose the matrix entries (i.e., the ``failover ports'') uniformly at random from the set of next hops which are still available, and depending on the
source and destination address, in order to balance the load.
However, note that a random and \emph{independent} choice will quickly introduce loops in the forwarding sequences: it is likely that a switch will forward traffic to a switch which was already visited before on the failover path.

Thus, our randomized failover scheme $\rfs$ will choose random \emph{permutations}, i.e.,
for a source-destination pair $(v_i,v_n)$, the sequence $\delta_{i,1},\delta_{i,2},\ldots ,\delta_{i,n-2}$ (with $\delta_{i,j}\in V\setminus\{v_i,v_n\}$) is \emph{always loop-free} (deterministically).
Technically, $\rfs$ draws all $\delta_{i,j}$ uniformly at randomly from $V\setminus\{v_i,v_n\}$ but eliminates repetitions (e.g., by redrawing a repeated node).
We can show that $\rfs$ is almost optimal, in the following sense.
\begin{theorem}\label{thm:rand_strong_adv}
Using the $\rfs$ scheme, in order to create a maximum load of $\loadmax = \sqrt{\varphi}$, the adversary will have to fail at least $\Omega \left(\frac{\varphi}{\log n}\right)$ links \emph{w.h.p.}, where $0<\varphi<n$.
\end{theorem}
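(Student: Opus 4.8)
The plan is to reduce to the adversary's optimal strategy and then exploit two independent ``spreading'' properties of the random permutations. First I would invoke Claim~\ref{claim:best_adv_strategy}: it suffices to consider an adversary that only fails links towards $v_n$, so a failure set is just a set $B$ of nodes whose link to $v_n$ is cut, with $|B|$ the number of failures. Under $\rfs$ a flow $v_i\to v_n$ leaves the network at the first node of its random sequence $(v_i,\delta_{i,1},\delta_{i,2},\dots)$ whose link to $v_n$ survives, i.e.\ the first node not in $B$. Hence, to create load $\loadmax=\sqrt{\varphi}$ on a link $(w,v_n)$, the adversary must make $w$ the first surviving node for at least $\sqrt{\varphi}$ flows, which forces every node preceding $w$ in each such sequence to lie in $B$. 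Writing $\pi_i(w)$ for the position of $w$ in flow $i$'s sequence and $\mathrm{Pred}_i(w)$ for the set of its $\pi_i(w)$ predecessors, the number of failures satisfies $|B|\ge\bigl|\bigcup_{i\in S}\mathrm{Pred}_i(w)\bigr|$ for any set $S$ of $\sqrt{\varphi}$ funneled flows.

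To bound this union from below I would work with the $L:=\sqrt{\varphi}$ funneled flows of smallest position (call this set $S_0$) and control two quantities. The first is the total ``mass'' $\sum_{i\in S_0}\pi_i(w)$. Since the positions of $w$ across the $n-1$ flows behave like independent uniform samples, a Chernoff bound shows that w.h.p.\ only $O(t)$ flows have $w$ within their first $t$ hops; consequently the $L$ smallest positions are spread out (the $j$-th smallest is $\Omega(j)$), so w.h.p.\ $\sum_{i\in S_0}\pi_i(w)=\Omega(L^2)=\Omega(\varphi)$ while every flow in $S_0$ has $\pi_i(w)=O(L)$. The second quantity is the maximal multiplicity $\mu=\max_u\lvert\{i\in S_0: u\in\mathrm{Pred}_i(w)\}\rvert$, i.e.\ how often one node can serve as a predecessor of $w$ among the cheap flows. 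Because these flows have position $O(L)$, a fixed node $u$ precedes $w$ in such a flow only if $u$ and $w$ both fall in the first $O(L)$ hops, an event of probability $O(L^2/n^2)$; summed over the $n-1$ flows this has mean $O(L^2/n)=O(\varphi/n)=O(1)$ since $\varphi<n$, so a Chernoff bound together with a union bound over the $O(n^2)$ pairs $(u,w)$ gives $\mu=O(\log n)$ w.h.p.

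Combining the two via the elementary inequality $\bigl|\bigcup_{i\in S_0}\mathrm{Pred}_i(w)\bigr|\ge \bigl(\sum_{i\in S_0}\pi_i(w)\bigr)/\mu$ (the total mass is distributed over the nodes of the union, each appearing at most $\mu$ times) yields $|B|\ge\Omega(\varphi)/O(\log n)=\Omega(\varphi/\log n)$, as claimed; a final union bound over the $n$ possible targets $w$ makes the statement hold w.h.p.\ simultaneously for every link the adversary might aim at.

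The crux, and the step I expect to be most delicate, is the interplay between the two bounds, namely insisting on the \emph{cheapest} $L$ funneled flows. This restriction is precisely what keeps the relevant positions at $O(\sqrt{\varphi})$ and hence the expected predecessor-multiplicity at $O(1)$ (using $\varphi<n$), which in turn makes the $O(\log n)$ max-load bound go through; without it, flows with large positions would inflate the multiplicity and a naive union bound over all candidate failure sets $B$ would be far too lossy. A secondary technical point is that the Chernoff and union-bound steps need $\sqrt{\varphi}=\Omega(\log n)$ to beat the union over targets, so the bound is meaningful (and otherwise trivially true) in the regime $\varphi=\Omega(\log^2 n)$.
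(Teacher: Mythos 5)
Your reduction via Claim~\ref{claim:best_adv_strategy} and your ``mass'' bound are in the spirit of the paper's proof, but there is a genuine gap in the multiplicity step, and it is exactly where you place the crux. Your bound $\mu=O(\log n)$ is derived only for flows in which $w$ occupies one of the first $O(L)$ positions, $L=\sqrt{\varphi}$; you then apply it to $S_0$, the $L$ cheapest flows \emph{among those the adversary funnels}. But the adversary is free to funnel flows in which $w$ appears arbitrarily late, and then nothing forces $\pi_i(w)=O(L)$ for $i\in S_0$: that bound would hold only if $S_0$ were the $L$ \emph{globally} cheapest flows. A lower bound must cover every adversarial choice of funneled set, and you cannot argue the adversary would never use expensive flows: a flow with a long prefix is free if its predecessors are already covered by failures bought for other flows, whereas a cheap flow may require fresh failures, so no exchange argument reduces the general case to $S_0$ being globally cheapest. (A secondary, fixable issue: ``only $O(t)$ flows have $w$ within their first $t$ hops w.h.p.'' is false for $t\ll\log n$; the correct uniform statement is $O(t+\log n)$, which still gives mass $\Omega(L^2)$ once $L\ge C\log n$, consistent with your closing caveat.)

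This gap is not easily patched within your framework. If you raise the position threshold to $M$ to cover expensive flows, the max multiplicity among flows with position at most $M$ grows to $\Theta(M^2/n+\log n)$ w.h.p., and combining mass$/\mu$ with the trivial bound $\bigl|\bigcup_{i}\mathrm{Pred}_i(w)\bigr|\ge M$ bottoms out around $\sqrt{Ln}$, i.e.\ roughly $\varphi^{1/4}\sqrt{n}$; for $\varphi$ close to $n$ this is $n^{3/4}\ll \varphi/\log n$, so the theorem is lost in precisely the regime it is strongest. The paper avoids this by splitting differently: it proves the weaker mass bound $k\ge\varphi/(8\log n)$, which is valid for \emph{every} funneled set (any set's total prefix length dominates the sum of the $L$ globally smallest prefix lengths), and then shows via a balls-and-bins argument that a constant fraction of any total prefix consists of \emph{distinct} nodes, making this hold simultaneously for all $\binom{n-1}{\sqrt{\varphi}}$ candidate sets by a union bound (this is what forces the paper's requirement $\varphi\gtrsim\log^4 n$). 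That union bound over all candidate funneled sets is the missing ingredient in your proof; your union bound over the $O(n^2)$ pairs $(u,w)$ does not substitute for it.
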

\begin{proof}
To create a link load of $\sqrt{\varphi}$ with the minimal number of link failures, the adversary must in particular be able to route at least $\sqrt{\varphi}$ flows
to some node $w$. Given the $\sqrt{\varphi}$ load on the node, in the best case (for the adversary), the entire flow will be forwarded by $w$ on a single outgoing link.
(E.g., the link to the destination $v_n$.)
We will show that w.h.p., it is impossible for the adversary to route more than $\sqrt{\varphi}$ flows to a single node.

The adversary can put a high load on some node $w$ only if: 1) Node $w$ is located close to the beginning of many sequences (i.e., is in a small ``prefix'' of the sequences); thus, a small number of failures is sufficient to redirect the flow to $w$. 2) Many nodes appearing before $w$ in the sequence prefixes occur early in many other prefixes as well; thus, the adversary can ``reuse'' failed links to redirect also other source-destination pairs.
Note that these two requirements may conflict, but to prove the lower bound on the number of required failures, we can assume that both conditions are satisfied: the set of $\sqrt{\varphi}$ sequences with the largest number of node repetitions in the $w$-prefixes also have the shortest $w$-prefixes.

With this intuition in mind, let us compute the probability that a node $w$ appears more than approximately $\log n$ times at position $j$. Let $Y_i^j$ be an indicator random variable that indicates whether $w$ is located at position $j\in[1,\ldots ,n-2]$ in sequence $i\in[1,\ldots ,n-1]$. Let $Y^j=\sum_{i=1}^n Y_i^j$ be a random variable representing the number of times that $w$ appears at position $j$. Since the failover sequences are random, $\Pr(Y_i^j=1)=\frac{1}{n-2}$ ($w$ is neither the source nor the destination) and thus, $\forall j, \E\left[Y^j\right] = \tfrac{n-1}{n-2}$. Applying the Chernoff bound on the sum of $n$ \emph{i.i.d.}~Poisson trials, we obtain (for any $\delta > 0$):
\begin{align}
\Pr \left(Y^j >(1+\delta)\E\left[Y^j\right]\right)& \le 2^{-\delta \E \left[Y^j\right]} \nonumber\\
\Pr \left(Y^j > \frac{(1+3\log n)(n-1)}{n-2}\right) & \le 2^{-(3\log n) \times \frac{n-1}{n-2}}\nonumber\\
&\le 2^{-3\log n}=1/n^3. \nonumber
\end{align}
Let us denote $z = \frac{(1+3\log n)(n-1)}{n-2}$ and rewrite:
\begin{align}
\Pr \left(Y^j > z\right) \le 1/n^3. \nonumber
\end{align}
We can now apply a union bound argument\footnote{The union bound argument says that the probability of the union of the events is no greater than the sum of the probabilities of the individual events.} over all possible nodes $w$ and over all possible positions $j$, which yields that with probability at least $1-\frac{1}{n}$, any node will appear no more than $z$ times at each position.

%For $\varphi>400\log^2 n$, we get that $0.9^{\sqrt{\varphi}} < \frac{1}{n^2}$ and thus we can use a union bound on all possible candidates for the node $w$ (which adversary can choose to overload). So, we obtain that the total prefix length for any node $w$ is at least $\frac{n\sqrt{\varphi}}{4}$ with probability of at least $1-\tfrac{1}{n}$.

The adversary needs to select the $\sqrt{\varphi}$ sequences with the shortest $w$-prefixes. For a chosen sequence $i$, let us denote by $k_i$ the prefix length for node $w$ (the prefix length includes $w$ itself). Since each node will appear no more than $z$ times at each position (with probability of at least $1-\frac{1}{n}$) the minimum length of a \emph{total prefix} for any node $w$ can be derived. Let us denote the minimum \emph{total prefix} by $k$. Clearly, $k$ is minimized for the shortest possible prefixes $k_i$. According to the analysis above, with high probability, there are no more than $z$ prefixes of length $1$, no more than $z$ prefixes of length $2$, and so on.
Therefore:
\begin{align}
k &= \sum_{i=1}^{\sqrt{\varphi}}k_i \ge \sum_{i=1}^{z}1 + \sum_{i=1}^{z}2 +\cdots +  \sum_{i=1}^{z}\frac{\sqrt{\varphi}}{z} \nonumber \\
&= z\left(1+2+\cdots + \frac{\sqrt{\varphi}}{z}\right) =\frac{\varphi + \sqrt{\varphi}z}{2z} \ge \frac{\varphi}{2z}\nonumber \\
&\ge \frac{\varphi}{8\log n}.\label{eq:simple_den}
%&\cdots +\sum_{i=1}^{z}\left(\frac{\sqrt{\varphi}}{1+3\log n}-1\right) \nonumber\\
%&\left.\cdots + \left(\frac{\sqrt{\varphi}}{1+3\log n}-1\right)\right) \nonumber\\
%&= \frac{\varphi-\sqrt{\varphi}(1+3\log n)}{2(1+3\log n) \nonumber}\\
%&\ge \frac{0.9\varphi}{2(1+3\log n)}\label{eq:remove_sqrt}\\
%&\ge \frac{\varphi}{8\log n}.\label{eq:simple_den}
\end{align}
Eq.~\ref{eq:simple_den} is true since for $n\ge 6$, $\frac{(1+3\log n)(n-1)}{n-2} \le 8\log n$.

In conclusion, we know that in order to achieve a load of $\sqrt{\varphi}$, the adversary has to fail the entire total prefix of $w$ that consists of at least $\frac{\varphi}{8\log n}$ nodes.
However, the nodes in the prefixes are not necessarily all distinct, and the number of links the adversary needs to fail only depends on the \emph{distinct} nodes in the \emph{total prefix} of the node $w$. The latter is true due to the fact that the best adversarial strategy is to fail only the links to the destination since in this case every such failure is reused once the same node appears again in the \emph{total prefix} of $w$ (see Claim \ref{claim:best_adv_strategy}).
Hence, we next compute the minimum number of distinct nodes $D$ in any set of $k$ random nodes. As we are interested in lower bounding $D$, we can choose $k$ minimal, i.e., $k=\frac{\varphi}{8\log n}$. The analysis follows from a \emph{balls-and-bins} argument where bins represent node IDs and balls are the $k$ positions that should be failed by the adversary. Thus, $D$ is a number of occupied bins (i.e., bins that contain at least one ball).
Let $D_i$ be a binary random variable indicating that the $i$-th ball falls into an empty bin (i.e., $D=\sum_{i=1}^k D_i$).
So, $
\Pr(D_i=1)\ge\frac{n-1-k}{n-1}$.
Since $k = \frac{\varphi}{8\log n}$ and $\varphi < n$, we obtain that:
\begin{align}
\Pr(D_i=1)\ge \frac{n-1-k}{n-1}&\ge \frac{8\log n -1}{8\log n}\ge 0.8.\nonumber
\end{align}
Thus,
$\E[D]=k\E[D_i]\ge 0.8k.
$
%
%Notice that $\frac{8\log n -1}{8\log n}\ge 0.8$, thus: $\E[D]\ge 0.8 k$.
Now we can apply the Chernoff bound (for any $\delta \in (0,1]$):
\begin{align}
\Pr(D\le (1-\delta)0.8k)&\le \Pr(D\le (1-\delta)\E[D]) \nonumber\\
&\le e^{-\E[D]\delta^2/2} \le e^{-0.8k\delta^2/2} \nonumber.
\end{align}
By taking $\delta=0.5$ we obtain $
\Pr(D\le 0.4k)\le e^{-0.1k}.
$

It remains to prove that this bound still holds under the union bound for all $\binom{n-1}{\sqrt{\varphi}}$ possible sets of sequences that the adversary can choose. In other words, we have to ensure that $\binom{n}{\sqrt{\varphi}}e^{-0.1k} \le \frac{1}{n}$ (we took a larger number, since: $\binom{n}{\sqrt{\varphi}}\ge \binom{n-1}{\sqrt{\varphi}}$).
\begin{align}
\binom{n}{\sqrt{\varphi}}e^{-0.1k} &\le n^{\sqrt{\varphi}}e^{-0.1k}
= n^{\sqrt{\varphi}}e^{-\frac{\varphi}{80\log n}}\label{eq:union_reuse}\\
&= e^{\sqrt{\varphi}\ln n-\frac{\varphi}{80\log n}}
=e^{\varphi \left(\frac{\ln n}{\sqrt{\varphi}}-\frac{1}{80\log n}\right)} \nonumber\\
&\le e^{\varphi \left(\frac{\log n}{\sqrt{\varphi}}-\frac{1}{80\log n}\right)}. \nonumber
\end{align}

For $\varphi\ge 82^2\log^4 n$, we have $(\frac{\log n}{\sqrt{\varphi}}-\frac{1}{80\log n})\le \frac{-2}{82^2 \log n}$, and hence
$\binom{n}{\sqrt{\varphi}}e^{-0.1k} \le e^{\frac{-2\varphi}{82^2\log n}}
\le e^{-2\log^3 n} \le \frac{1}{n^2}$.
Since
$\binom{n}{\sqrt{\varphi}}\Pr(D\le 0.4k) = \binom{n}{\sqrt{\varphi}}\Pr (D\le \frac{0.4\varphi}{8\log n})
\le \frac{1}{n^2}$, w.h.p., any set of $\sqrt{\varphi}$ sequences (i.e., $w$-prefixes) will require $\Omega(\frac{\varphi}{\log n})$ failures.
\hfill $\Box$\end{proof}

\subsection{Deterministic Failover}\label{ssec:det}

Theoretically, the result of Theorem~\ref{thm:rand_strong_adv} can be derandomized, i.e., the $\rfs$ scheme can \emph{deterministically} ensure low loads. The idea is that we could \emph{verify} whether an (improbable) situation occurred and the random sequences generated by $\rfs$ actually yield a \emph{high} load (we just need to check all possible loads at any $w$); if so, another set of random permutations is generated. However, this verification is computationally expensive.

We hence now initiate the discussion of efficient deterministic schemes. In particular, we propose an optimal failover scheme (which matches our lower bound in Section~\ref{sec:worstcase}), at least for small $\varphi$.
Similar to $\rfs$, the deterministic failover scheme $\dfs$ is defined by a \emph{failover matrix} $\delta_{i,j}$; however, here $\delta_{i,j}$ will simply refer to a node's \emph{index} (and not the node itself): We define the index of any node $v_{\ell}$ to be $\ell-1$, i.e., the nodes $\{v_1,v_2,\ldots ,v_n\}$ are mapped to the indices $\{0,1,\ldots ,n-1\}$. Given a destination node $v_n$, $\dfs$ is defined by the following index matrix:
\begin{center}
\begin{align*}
1, 2, 4,8,\ldots, \left(0+2^{\left\lfloor \log n\right\rfloor}\right)\mod n \nonumber\\
2, 3, 5,9,\ldots, \left(1+2^{\left\lfloor \log n\right\rfloor}\right)\mod n \nonumber\\
3, 4, 6,10\ldots, \left(2+2^{\left\lfloor \log n\right\rfloor}\right)\mod n \nonumber\\
\ldots
\end{align*}
\end{center}

In general, the index in sequence $i\in[1,\ldots, n-1]$ at position $j\in[1,\ldots,\left\lfloor \log n\right\rfloor]$ is $\delta_{i,j} = (i-1)+2^{j-1} \mod n$. For example, if the link $(v_1,v_n)$ fails, $v_1$ will reroute via the node with index $1$, i.e., via $v_2$; and so on.
We can show the following result.
\begin{theorem}\label{thm:dfs}
The $\dfs$ scheme achieves a maximum load of $\loadmax=O(\sqrt{\varphi})$ in any scenario with $\varphi<\lfloor \log n \rfloor$ failures.
\end{theorem}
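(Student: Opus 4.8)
The plan is to combine the adversary characterisation of Claim~\ref{claim:best_adv_strategy} with a double-counting argument that exploits the powers-of-two structure of the $\dfs$ matrix. By Claim~\ref{claim:best_adv_strategy} I may assume that the adversary fails only links incident to $v_n$, so the failures are described by a set $S\subseteq\{0,\ldots,n-1\}$ of node indices whose link to $v_n$ is cut, with $|S|=\varphi$, and the maximally loaded link is a link $(w,v_n)$. Writing $\sigma_0=0$ and $\sigma_k=2^{k-1}$ for the offsets, a flow originating at index $a$ tries the candidate last hops $a+\sigma_0,a+\sigma_1,\ldots\pmod n$ and uses the first one whose link to $v_n$ survives. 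Fixing the bottleneck node $w$ of index $\omega\notin S$, a source therefore reaches $w$ at position $k$ iff its index is $a=\omega-\sigma_k\pmod n$ and its entire length-$k$ prefix $P_k=\{\,\omega-(\sigma_k-\sigma_{k'}) : 0\le k'<k\,\}$ is contained in $S$. Since the offsets are distinct (and $\varphi<\lfloor\log n\rfloor$ keeps $\sigma_k<n$, so there is no wrap-around), at most one source reaches $w$ per position, and a load of $L$ on $(w,v_n)$ corresponds to $L$ distinct positions $k_1<\cdots<k_L$.

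The key step, and the main obstacle, is to show that every failed link is reused by at most two of these prefixes. A failed node of index $\xi$ lies in $P_{k_t}$ precisely when the fixed value $\Delta=\omega-\xi\pmod n$ can be written as $\sigma_{k_t}-\sigma_{k'}$ for some $k'<k_t$, i.e. as a difference of two offsets whose larger term is $\sigma_{k_t}$. Because the offset set is $\{0\}\cup\{2^i\}$, I will argue that any $\Delta>0$ admits at most two such representations: at most one of the form $2^i-0$ (only when $\Delta$ is itself a power of two), and at most one of the form $2^i-2^j$ with $i>j\ge 0$, since there $j$ is forced to be the $2$-adic valuation of $\Delta$ and then $2^{\,i-j}=\Delta/2^{j}+1$ determines $i$. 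Each representation pins down the larger offset and hence the position $k_t$, so each node of $S$ occurs in at most two of the prefixes $P_{k_1},\ldots,P_{k_L}$; this bounded-reuse property is exactly what the geometric (rather than arithmetic) spacing of the backups buys us.

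It then remains to close the accounting. Since the $k_t$ are distinct non-negative integers, $\sum_{t}|P_{k_t}|=\sum_t k_t\ge\binom{L}{2}$; on the other hand every element counted on the left lies in $S$ and, by the previous paragraph, is counted at most twice, so $\sum_t|P_{k_t}|\le 2|S|=2\varphi$. Combining the two inequalities yields $\binom{L}{2}\le 2\varphi$, hence $L=O(\sqrt{\varphi})$, which is exactly the claimed bound $\loadmax=O(\sqrt{\varphi})$.

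Finally I would verify that the hypothesis $\varphi<\lfloor\log n\rfloor$ is used exactly where it is needed: reaching position $k$ costs $k$ distinct failures, so all attained positions satisfy $k_t\le\varphi<\lfloor\log n\rfloor$, which both keeps them within the $\lfloor\log n\rfloor$ columns of the matrix and guarantees $\sigma_{k_t}=2^{k_t-1}<n$, so that the difference-counting lemma operates over genuine integers rather than residues modulo $n$. The maximum over all links, and not just destination links, is located at some $(w,v_n)$ by Claim~\ref{claim:best_adv_strategy}, so the bound on $L$ above is the bound on $\loadmax$.
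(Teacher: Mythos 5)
Your proof is correct, and its skeleton matches the paper's: restrict the adversary to failing destination links via Claim~\ref{claim:best_adv_strategy}, observe that bringing a flow to $w$ at position $k$ forces that flow's $k$-element prefix into the failed set $S$, and close with the quadratic accounting $\sum_t k_t=\Omega(L^2)\le O(\varphi)$. The key combinatorial lemma, however, is genuinely different, and yours is the more careful one. The paper proves that the prefixes of $w$ consisting of \emph{matrix entries} are pairwise disjoint across sequences, and from this concludes that no failure is ever reused, i.e., $1+2+\cdots+L\le\varphi$. But the accounting must also charge the \emph{source's} failed link, and sources do collide with matrix entries of other sequences: in your notation ($\sigma_0=0$, $\sigma_k=2^{k-1}$), since $\sigma_2-\sigma_1=1=\sigma_1-\sigma_0$, the node of index $\omega-1$ is simultaneously the source of the flow meeting $w$ at position $1$ and the first backup in the prefix of the flow meeting $w$ at position $2$, so that single failure serves two flows; the paper's ``no reuse'' step silently overlooks exactly these source-versus-entry collisions. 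Your representation-counting lemma --- $\Delta$ is at most once a pure power of two and at most once a difference $2^i-2^j$, with $j$ pinned by the $2$-adic valuation --- is precisely the correct repair: it caps reuse at two, costing only a factor of $2$ in the final inequality ($\binom{L}{2}\le 2\varphi$ instead of $L(L+1)/2\le\varphi$), which is immaterial for the $O(\sqrt{\varphi})$ bound. Two small points to tighten. First, Claim~\ref{claim:best_adv_strategy} does not literally place the maximum load on a link $(w,v_n)$; the cleaner statement, used by the paper, is that your count bounds the number of flows traversing the \emph{node} $w$ (your counting never actually uses $\omega\notin S$), and a node's flow count dominates the load of every incident link. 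Second, when a candidate hop equals the destination index $n-1$ it is skipped at no cost to the adversary; by your own lemma this affects at most two prefixes of $w$, so your inequality weakens only to $\binom{L}{2}-2\le 2\varphi$, still giving $\loadmax=O(\sqrt{\varphi})$.
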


\begin{proof}
We will prove something even stronger: the adversary cannot choose link failures such that any \emph{node} $w$ forwards more than $\sqrt{\varphi}$ flows.
Clearly, an upper bound on the node load is an upper bound on the (incident) links: in the worst case, $w$ will forward all traffic to the same link.
To create a high load at some node $w$, the adversary needs to find failover sequences in the matrix $\delta_{i,j}$
where the node $w$ appears close to the \emph{beginning} of the sequence, and fail \emph{all} the links $(v_i,v_n)$, where $v_i$ is a node preceding $w$ in a sequence: i.e., the adversary fails the \emph{total prefix} of $w$. Note that failing the links to the destination is the best strategy for the adversary as failures are automatically reusable in other sequences (see Claim \ref{claim:best_adv_strategy}).

The following two claims will help us to show that the adversary wastes its entire failure budget in order to achieve a maximum load of $\sqrt{\varphi}$.
\begin{clm}
Every node index participates in only $\left\lfloor \log n\right\rfloor$ sequences.
\end{clm}
\begin{proof}
%The claim follows directly from the matrix representation of the failover scheme: by construction, there are no repetitions in matrix column.
The $\dfs$ failover matrix is defined as $\delta_{i,j} = (i-1)+2^{j-1} \mod n$, where $i\in [1,\ldots,n-1]$ and $j\in [1,\ldots,\left\lfloor \log n\right\rfloor]$. From this construction, it follows that there are no index repetitions in the matrix columns. Since there are $\left\lfloor \log n\right\rfloor$ columns, the claim follows.
\hfill $\Box$\end{proof}

\begin{clm}
For any node index $\ell$, all $\ell$-prefixes (sets of indices preceding $\ell$ in the sequences) are disjoint.
\end{clm}
\begin{proof}
Let us define $m = i-1$ and $k = \ell-1$.
The index in sequence $m\in[0,\ldots, n-2]$ at position $k\in[0,\ldots,\left\lfloor \log n\right\rfloor -1]$ is $m+2^k \mod n$.
Consider a sequence $m'$ where the index $w$ appears at position $k'$ and a sequence $m''$ where the index $\ell$ appears at position $k''$.
Without loss of generality, assume that $k''>k'$.
Let $m'+2^{k^*} \mod n$ and $m''+2^{k^{**}} \mod n$ represent the indices in the prefixes of $\ell$ in sequences $m'$ and $m''$ accordingly. Assume by contradiction that these indices are the same. We have that
\begin{align}
m'+2^{k'} &= m''+2^{k''} \mod n \nonumber\\
m'+2^{k^*} &= m''+2^{k^{**}} \mod n\text{ (assumption)}\nonumber
\end{align}

\noindent and hence

\begin{align}
m'-m'' &= 2^{k''} - 2^{k'} + n\cdot C_1 \nonumber\\
m'-m'' &= 2^{k^{**}} - 2^{k^*} + n\cdot C_2\nonumber
\end{align}

Therefore

\begin{align}
2^{k^{**}}-2^{k''}+2^{k'}-2^{k^*} = n\cdot C_3 \nonumber
\end{align}
\noindent where $C_1, C_2$ and $C_3$ are some integer constants.

Notice that $\max(2^{k^{**}},2^{k''},2^{k'},2^{k^*} ) < n$, so the only possible values for $C_3$ are: $\{-1,0,1\}$.
Moreover, $(2^{k^{**}}-2^{k''})<0$, while $(2^{k'}-2^{k^*})>0$, and since the absolute value of these differences is bounded by $2^{\left\lfloor \log n\right\rfloor -1} \le 0.5n$, we can write:
\begin{align}
-0.5n <2^{k^{**}}-2^{k''}+2^{k'}-2^{k^*} < 0.5n.\nonumber
\end{align}
Thus, $0$ remains the only possible value for $C_3$.
The values $\{2^{k^{**}},2^{k''},2^{k'},2^{k^*}\}$ are distinct since there are no repetitions in the columns of the sequence matrix. Since $2^{k''}>2^{k^{**}}+2^{k'}+2^{k^*}$, due to a geometric series argument (the largest element is greater than the sum of all previous elements), we can state that
\begin{align}
2^{k^{**}}-2^{k''}+2^{k'}-2^{k^*} < 0.\nonumber
\end{align}
We conclude that there is no integer constant $C_3$ satisfying our assumption $m'+2^{k^*} = m''+2^{k^{**}} \mod n$ (i.e., there are two identical indices in the $\ell$-prefixes).
\hfill $\Box$\end{proof}

\begin{figure}[t]
\centering
\includegraphics[width=.5\columnwidth]{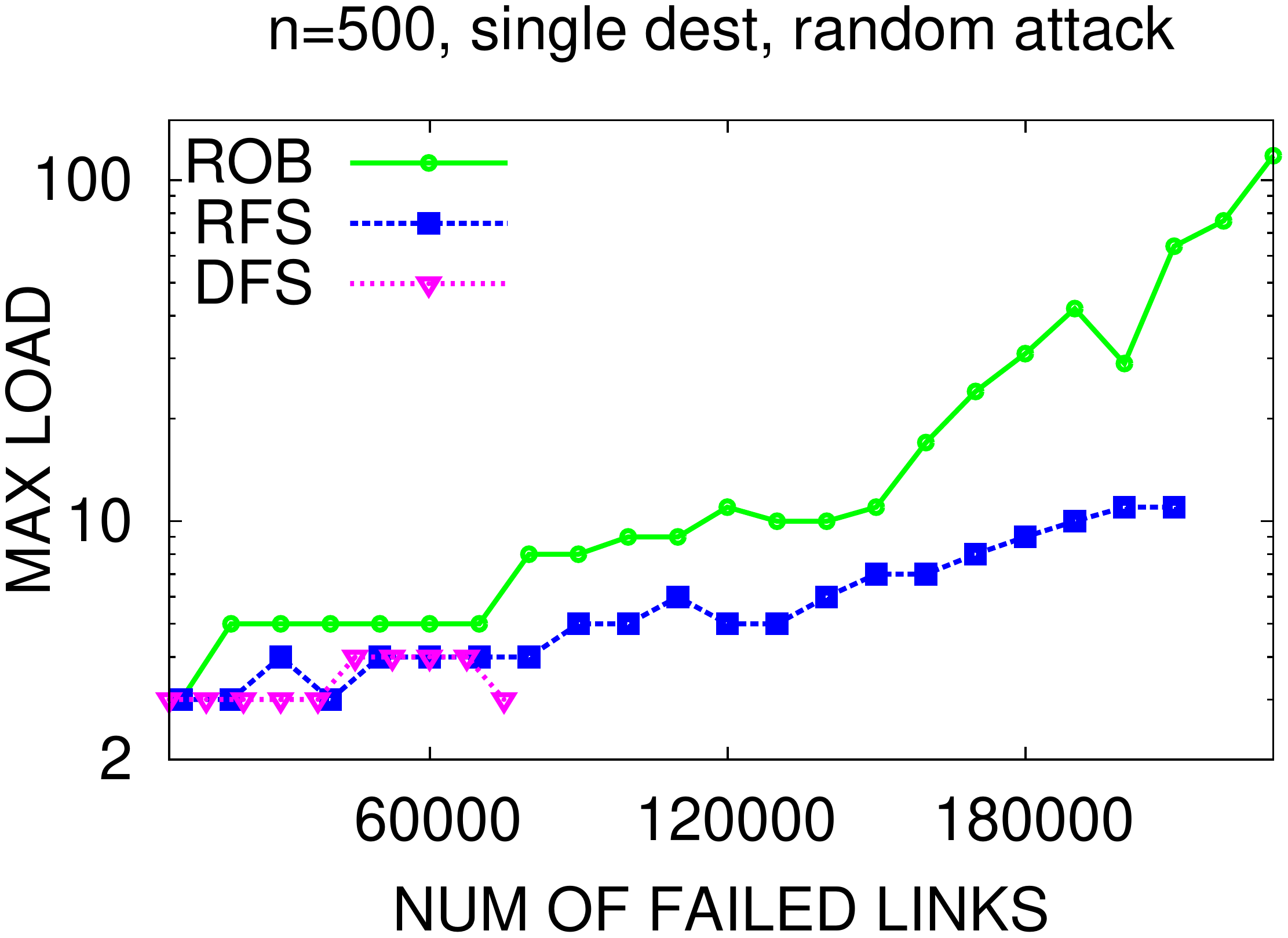}\\
\caption{Like Figure~\ref{fig:single-ecplise}, but under random failures.}\label{fig:ran}
\end{figure}

\begin{figure}[t]
\centering
\includegraphics[width=.49\columnwidth, height = 0.4\columnwidth]{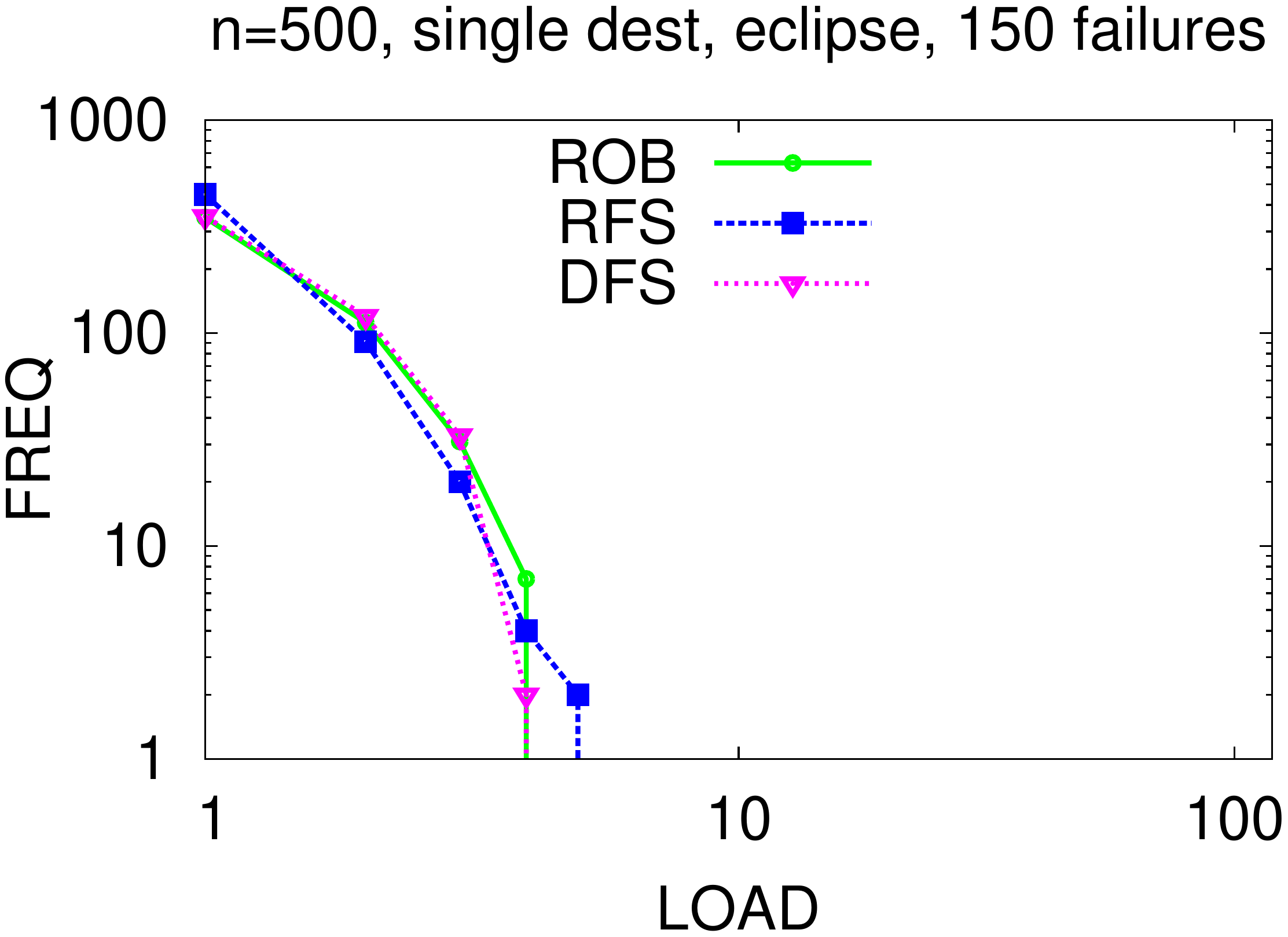}
\includegraphics[width=.49\columnwidth, height = 0.4\columnwidth]{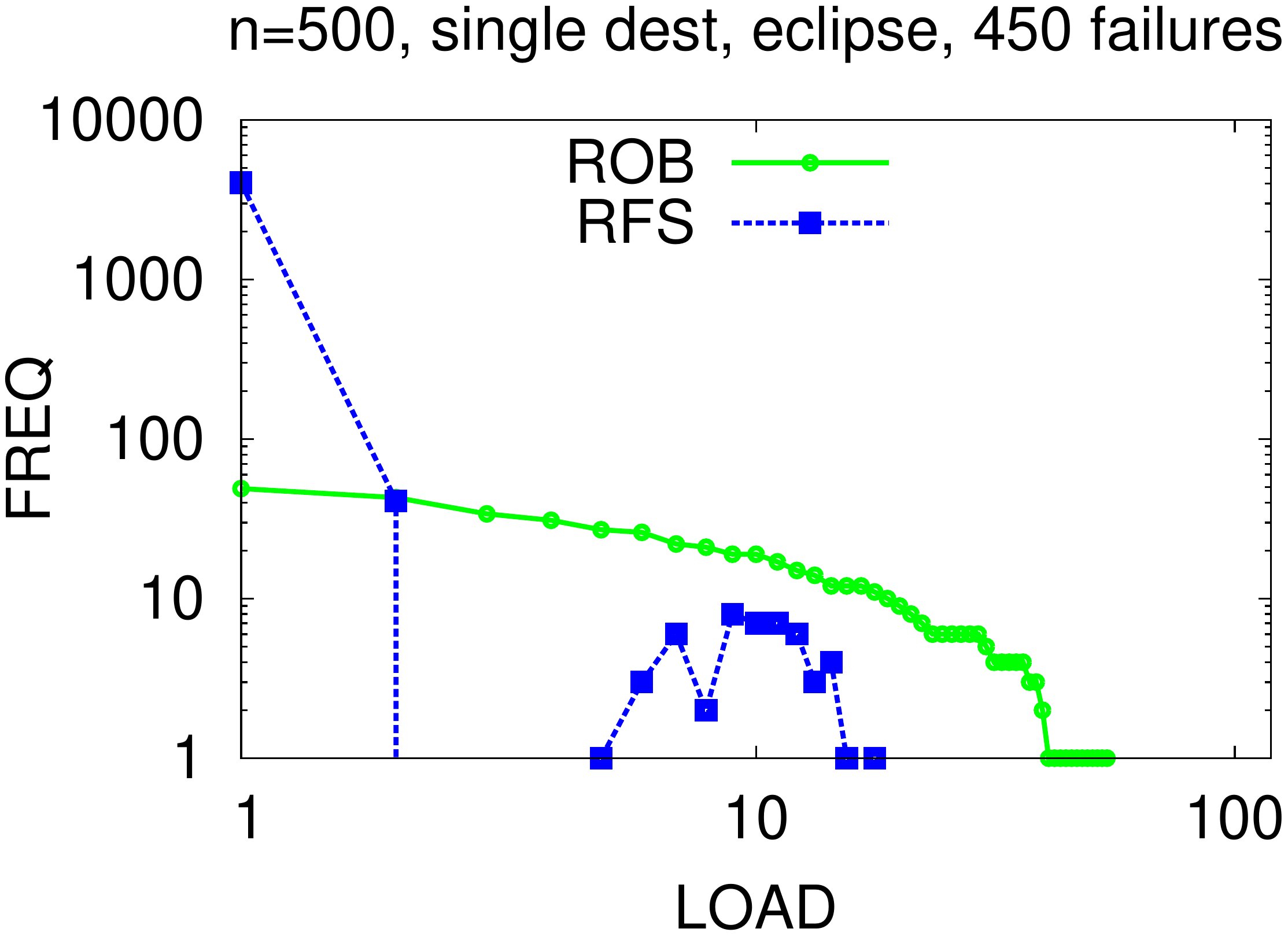}\\
\caption{Load distribution over links.}\label{fig:single-loaddist}
\end{figure}

Armed with these claims, we are ready to continue with the proof. Since all prefixes are disjoint, the adversary cannot reuse failures of one flow for another. Thus, the adversary will be able to route \emph{one} flow to $w$ using a single failure (by finding a sequence in which $w$ appears at the first position); to add another flow, the adversary takes a sequence $\delta_i$ in which $w$ is located at position 2 and will fail the links $(v_i, v_n)$, and $(v_{(\delta_{i,1})+1},v_n)$.  And so on.
Thus, the number of used failures can be represented as
\begin{align}
1+2+3+4+\cdots + L \le \varphi \nonumber
\end{align}
\noindent where $L$ is the number of flows passing through $w$ on the way to the destination $v_n$.
%In our algorithm, we will assume that if a node has a direct link towards the destination, it will use it. This assumption can make the upper bound only worse.
So:
\begin{align}
1+2+3+4+\cdots + L &\le \varphi \nonumber\\
\frac{L(L+1)}{2} & \le \varphi \nonumber\\
L<\sqrt{2\varphi}.\nonumber
\end{align}

\begin{figure}[t]
\centering
\includegraphics[width=.49\columnwidth, height = 0.4\columnwidth]{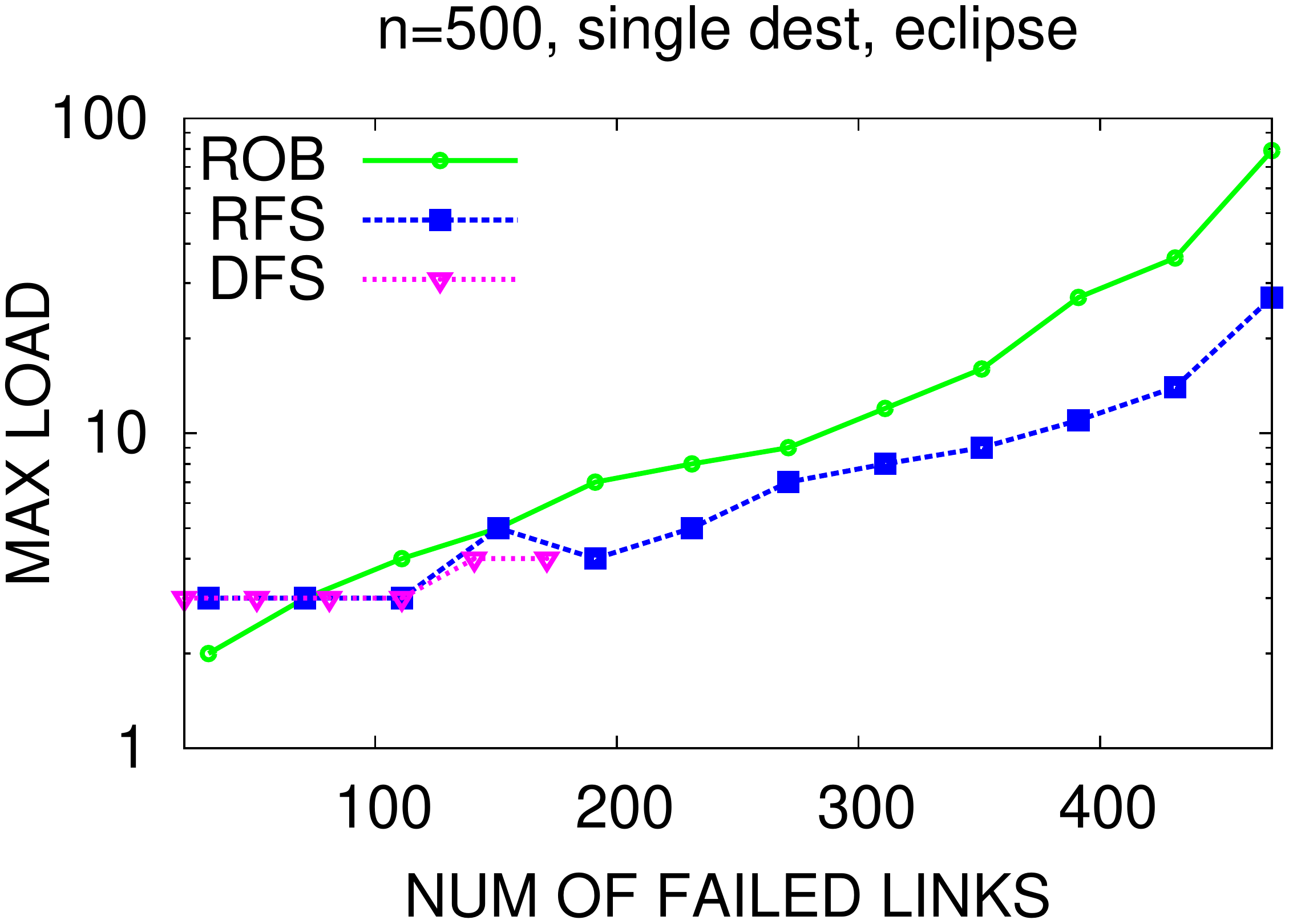}
\includegraphics[width=.49\columnwidth, height = 0.4\columnwidth]{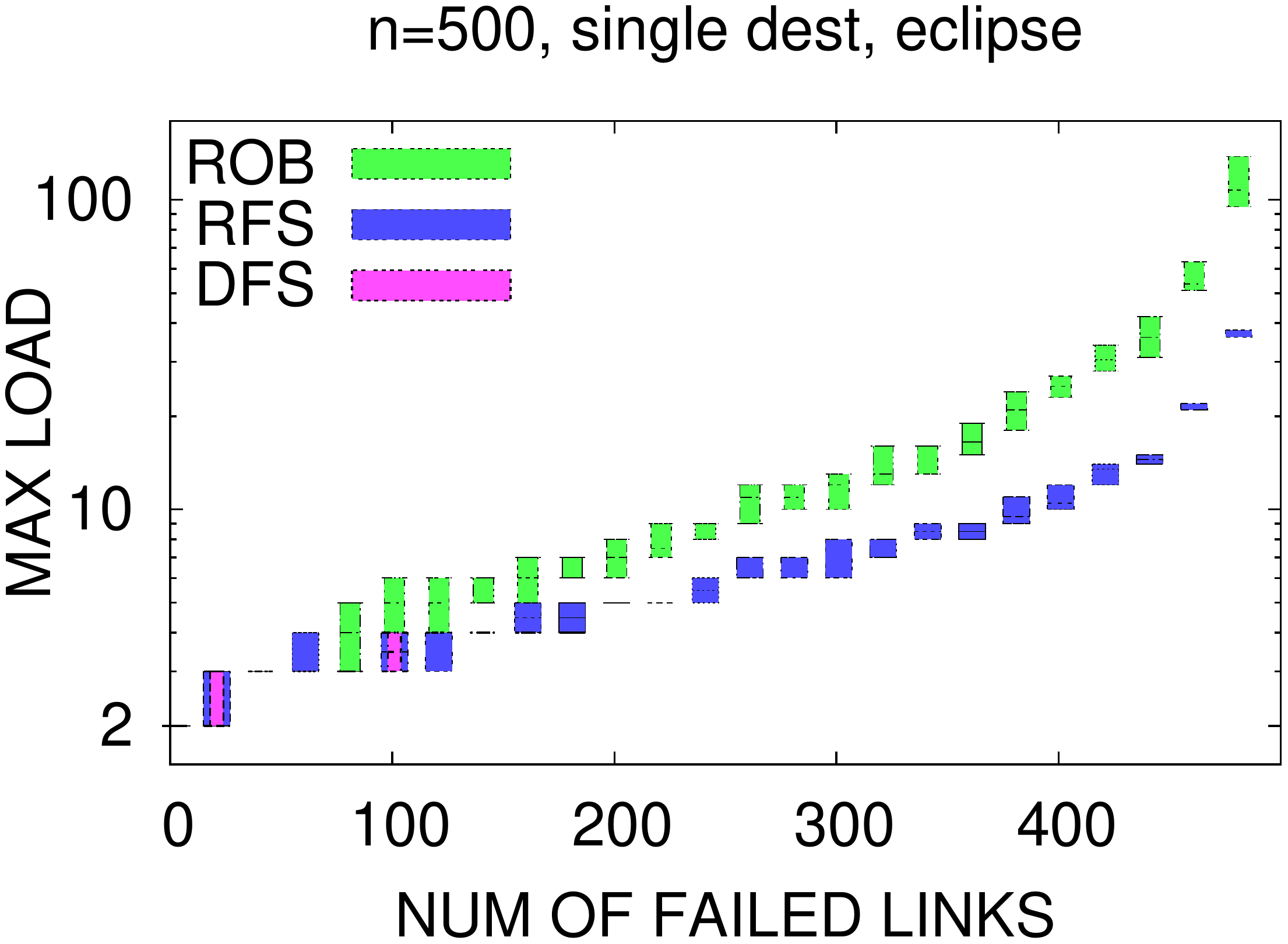}\\
\caption{\emph{Left:} Load
% as a function of the number of failures
 for different algorithms ($n=500$).
\emph{Right:} Boxplot.}\label{fig:single-ecplise}
\end{figure}

Note that the index of the destination node ($n-1$ in our case) can appear inside the failover sequences. In this case, the index will be skipped since the link to it from the source already failed. By skipping one index, we shorten the failover sequence by 1, and since every sequence has length $\left\lfloor \log n\right\rfloor$, our failover scheme holds for any $\varphi < \left\lfloor \log n\right\rfloor$.
\hfill $\Box$\end{proof}

\newpage
%\vspace{-3mm}
\section{Beyond Worst-Case Failures}\label{sec:sims}
%\vspace{-3mm}

\begin{wrapfigure}{r}{0.5\textwidth}
\centering
\includegraphics[width=.5\textwidth]{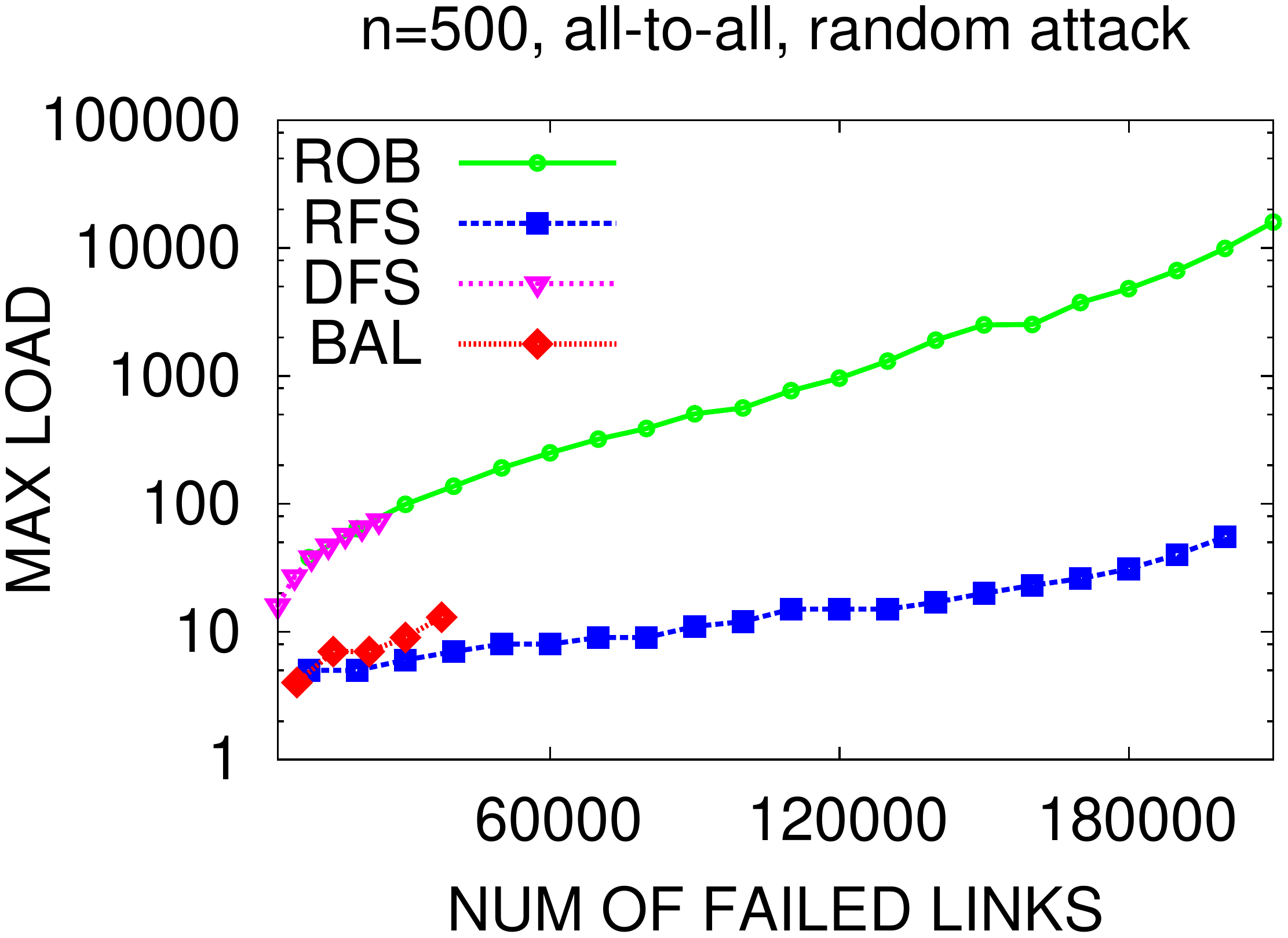}
\caption{Max load in all-to-all communication.}\label{fig:all-to-all-LOGy}
\end{wrapfigure}
To complement our worst-case bounds, we conducted simulations with different failure scenarios. (1) \textbf{Ran}: links are failed uniformly at random; (2) \textbf{Ecl} (an ``eclipse attack''): links are removed at random around destination $v_n$. We used two traffic patterns. (1) \emph{Single dest:} one unit of flow from each node to $v_n$; (2) \emph{all-to-all:} one unit of flow from each node to every other node. In addition to our failover schemes $\rfs$ and $\dfs$, we also simulate the following naive strategies. (1) \textbf{Bal} (``balanced''): If the destination cannot be reached directly, forward to an available port chosen uniformly at random (depending on the destination and the set of failed links). This strategy seeks to balance traffic but does not ensure loop-freeness. (2) \textbf{Rob} (``robust''): If the destination cannot be reached, forward to the available neighboring switch which has the lowest identifier (in a modulo manner, and assuming that switches have unique identifiers).
We start with the \emph{single dest} traffic pattern.
Figure~\ref{fig:single-ecplise} (\emph{left}) plots the load as a function of the number of failed links under \textbf{Ecl}. (Note the logarithmic scale on the y-axis.)
We observe that compared to Theorem~\ref{thm:rand_strong_adv} which deals with worst-case failures, $\rfs$ performs significantly better: while our conservative bound suggests that to create a load of $10=\sqrt{\varphi}$,
$\varphi/\log{n}=100/\log{500}=11.15$ failures are needed, more than 300 are necessary in our experiment. Moreover, we observe that $\rfs$ yields a much lower load than the naive approach \textbf{Rob}; for the single-destination scenario, \textbf{Bal} is similar to \textbf{Rob} and is not shown explicitly here.
The $\dfs$ algorithm also gives a low load; however, it is only defined up to a certain $\varphi$ (see Theorem~\ref{thm:dfs}). The variance of these experiments is typically small, see the boxplot in Figure~\ref{fig:single-ecplise} (\emph{right}).
%\vspace{-3mm}

%.39
As expected, under \textbf{Ran} failures, the load is generally lower, and our scheme can tolerate more failures without creating loops  (see Figure~\ref{fig:ran}).

As the maximal link load reveals a partial picture only, Figure~\ref{fig:single-loaddist}
studies the load distribution over multiple links (under \textbf{Ecl}), once for 150 failures (\emph{left}) and once for 450 failures
(\emph{right}). Obviously, most links hardly contain more than one or two flows under $\rfs$;
again, under the naive \textbf{Rob} strategy (and similarly for \textbf{Bal}), the situation is worse.

Let us have a look at alternative traffic matrices. Figure~\ref{fig:all-to-all-LOGy}
shows the results for an all-to-all communication pattern (under \textbf{Ran}). Interestingly,
for $\rfs$,
the load is not much higher than in the single-destination scenario; this confirms the
good load-balancing properties of $\rfs$. However, we also see
that $\dfs$ performs poorly and needs to be generalized for the multi-destination scenario.
Finally, we note that in this scenario, we can exploit \textbf{Bal}'s flexibility and in contrast
to the single destination case, the algorithm significantly outperforms \textbf{Rob} (in terms of load).

%\begin{figure}[!ht]
%\centering
%\includegraphics[width=.5\columnwidth]{rfs_dfs_all-to-all_1}
%\caption{Load distribution over links.}\label{fig:all-to-all}
%\end{figure}

%\begin{figure}[!ht]
%\centering
%\includegraphics[width=.49\columnwidth]{bal_rob_all-to-all_LOGy}
%\includegraphics[width=.49\columnwidth]{rfs_dfs_all-to-all_LOGy}\\
%\caption{Max Load.}\label{fig:all-to-all_LOGy}
%\end{figure}

%In conclusion, we find that under less adversarial failures, the failover schemes perform quite well.

%\vspace{-3mm}
\section{Related Work}\label{sec:relwork}
%\vspace{-3mm}

This work is motivated by the trend towards Software-Defined Networking and in particularly the fast failover mechanism which supports the in-band masking of failures (see Section~5.8 of the OpenFlow 1.1 specification). However, as the convergence time of routing algorithms is often relatively high compared to packet forwarding speeds, ranging from 10s of milliseconds to seconds depending on the network~\cite{podc12shenker}, many networks today incorporate some robustness already in the forwarding tables of a router or switch: Thus, robust routing concepts and link protection schemes have been studied intensively for many years, also outside SDN.

For example, robust Multiprotocol Label Switching (MPLS) supports local and global path protection to compute shortest backup paths around an outage area~\cite{mpls-fast-reroute,r37}, where ``shortest'' is often meant in terms of congestion~\cite{r32,r38}. Related to our connectivity and load-balancing tradeoff is also the work by Suchara et al.~\cite{jrex-fail} who analyze how to jointly optimize traffic engineering and failure recovery from pre-installed MPLS backup paths. However, in contrast to our paper, their solution is path-based and not local, and the focus is on robust optimization.

Alternative solutions to make routing more resilient rely on special header bits (e.g., to determine when to switch from primary to backup
paths, as in \emph{MPLS Fast Reroute}~\cite{mpls-fast-reroute}, or to encode failure information to make
failure-aware forwarding decisions~\cite{header1,header2}), or on fly table modifications~\cite{on-fly}.
Recently, Feigenbaum et al.~\cite{podc12shenker} made an interesting first step towards a better theoretical understanding of resilient SDN tables.
The authors prove that routing tables can provide guaranteed resilience (i.e., loop-freeness) against a \emph{single} failure, when the network remains connected.

%Our work and model are related to the field of local algorithms~\cite{Peleg} where distributed computations are performed based on topologically local information only, %as well as \emph{compact routing schemes}~\cite{compact2,compact1}. However, we are not aware on any study comparing consistency and performance of local and global %routing schemes under failures.

%More remotely, the flow-based SDN models are related to classical unsplittable flow routing~\cite{ufazar}.
%The main objective of this line of research is typically to accept and embed as many communicating sender-receiver pairs as possible, e.g., in multicast %situation~\cite{ufskutella} or in general~\cite{ufazar,ufmip,ufkleinberg}.

\section{Conclusion}\label{sec:conclusion}

%We understand our paper as a first step towards a better understanding of the use and limitations of in-band fast failover. The main takeaway from our work is that %although in the worst-case, fast failover can perform poorly or even be harmful, in scenarios with relatively few or randomized failures, it can be a useful. In general, %there exists an interesting tradeoff between consistency and performance.

So, will or won't you shoot in your foot with fast failover? Our results show that there exists an interesting tradeoff between a ``safe'' and ``efficient'' failover. The usefulness of the local failover depends on whether link failures are rather adversarial or random, and on how flexibly the failover rules can be specified.
In particular, we have seen that the possibilities of destination-based failover schemes are very limited. But also more expressive failover schemes where flows can be forwarded depending on arbitrary local matching rules (this is more general than today's OpenFlow specification), can lead to high network loads in the worst case.
On the positive side, relatively simple algorithms exist which match these lower bounds.

%Our paper opens several interesting directions for future research. For example, it raises the question whether there exist alternative and fast failover mechanisms %which allow us to break the derived lower bounds; such a mechanism might be useful for future OpenFlow versions. Moreover, this paper ignored another important drawback %of fast failover: namely that in a dynamic environment, local failovers may contribute towards a desynchronization of global and local control and a destabilization of %the system (see, e.g.~\cite{amess}); such synchronization aspects hence constitute another important avenue for future work.

\textbf{Acknowledgments.}
We would like to thank Chen Avin for valuable discussions and advice.

%\vspace{-3mm}
%{\footnotesize
%\renewcommand{\baselinestretch}{.6}
  \bibliographystyle{abbrv} 
  
%  \bibliography{literature}

%}

%\begin{comment}
%\cleardoublepage

\newpage
\begin{appendix}

\section{Algorithms Bal and Rob}

The $\Bal$ algorithm is executed sequentially for each node $v_i$ and for each failed link incident to $v_i$  chooses the new next hop based on the
current node ($v_i$) and the second end of the currently failed link.

\renewcommand{\algorithmicrequire}{\textbf{Input:}}
\renewcommand{\algorithmicensure}{\textbf{Output:}}

\begin{algorithm}[h]
    \caption{\textbf{$\Bal$}}
    \label{alg:bal}
    \begin{algorithmic}[1]
    \REQUIRE current node -- $v_i$, second end of failed link -- $v_j$
    \ENSURE new next hop
    \IF {$i>j$}
    \STATE $next\_hop = (i+j+1) \mod n$
    \ELSE
    \STATE $next\_hop = (i-j+1) \mod n$
    \ENDIF
    \WHILE{$(v_i,v_{next\_hop})$ is failed}
    \STATE $next\_hop = (next\_hop+1) \mod n$
    \ENDWHILE
    \RETURN $next\_hop$
    \end{algorithmic}
\end{algorithm}

The $\Rob$ algorithm always selects the next available neighbor as the new next hop.

\begin{algorithm}[h]
    \caption{\textbf{$\Rob$}}
    \label{alg:rob}
    \begin{algorithmic}[1]
    \REQUIRE current node -- $v_i$
    \ENSURE new next hop
    \STATE $next\_hop = (i+1) \mod n$
    \WHILE{$(v_i,v_{next\_hop})$ is failed}
    \STATE $next\_hop = (next\_hop+1) \mod n$
    \ENDWHILE
    \RETURN $next\_hop$		
    \end{algorithmic}
\end{algorithm}

\section{Discussion}

In this section we take a broader look at the problem and study additional scenarios and adversarial models.

\subsection{Realistic Adversary}

So far, we have concentrated on a worst-case adversary which fails exactly those links which will lead to the highest load. We now want to have a look at a more realistic, ``weak adversary'' $\textbf{Ran}$, who fails links at random (\emph{independently} of the installed failover rules).

\begin{theorem}\label{thm:rand_weak_adv}
Using the $\rfs$ failover scheme, in order to create a load of $\sqrt{\varphi}$, $\textbf{Ran}$ will have to fail at least $\Omega \left(\varphi\right)$ links \emph{w.h.p.} The same results hold even if $\textbf{Ran}$ knows the destination of the traffic.
\end{theorem}
\begin{proof}
We build upon the proof of Theorem~\ref{thm:rand_strong_adv}. Let us assume that $\textbf{Ran}$ knows the destination node $v_n$. Then, it will only fail the links incident to the destination: Consider any failover sequence $\delta_i$ and assume that the $j$ first nodes in this sequence are already in use as backup nodes. In order to make use of the node $j+1$, the adversary needs to fail either the link $(\delta_{i,j},v_n)$ or the link $(\delta_{i,j-1},\delta_{i,j})$. If the adversary will fail only links to the destination $v_n$, the probability of reaching the node $\delta_{i,j+1}$ is at least $1/n$, while a random link selection will succeed only with probability $2/(n(n-1))$. Moreover, such failures (i.e., failures of links towards the destination) can be reused better (see Claim \ref{claim:best_adv_strategy}).

From the proof of Theorem \ref{thm:rand_strong_adv}, we know that the minimal \emph{total prefix length} (i.e., the sum of all $\sqrt{\varphi}$-prefixes) of any node $w$ is $\frac{\varphi}{8\log n}$ and the number of distinct nodes in the prefix is at least $d=\frac{0.4\varphi}{8\log n}=\frac{0.05 \varphi}{\log n}\ge \frac{0.03 \varphi}{\ln n}$. The number of possible total prefixes is $\binom{n-1}{\sqrt{\varphi}}$. We now ask the following question: if the adversary fails $r$ (of course $r > d$) random links adjacent to the destination, what is the probability that these failures will cover at least one possible \emph{total} prefix? First, we will compute the probability $p_1$ that $r$ random failures will cover a single total prefix (with $d$ distinct nodes); subsequently, we will apply a union bound argument over all possible total prefixes. Simple combinatorics give us the expression for $p_1$:
\begin{align}
p_1 = \frac{\binom{n-d}{r-d}}{\binom{n}{r}} = \frac{(n-d)!r!}{(r-d)!n!}. \nonumber
\end{align}
Using Stirling's approximation for the factorial, we have:
\begin{align}
\sqrt{2\pi}\cdot a^{0.5+a} e^{-a} \le a!\le e\cdot a^{0.5+a} e^{-a}. \nonumber
\end{align}

Therefore,
\begin{footnotesize}
\begin{align}
p_1 &\le \frac{e(n-d)^{0.5+(n-d)}e^{-(n-d)} e\cdot r^{0.5+r} e^{-r}}{\sqrt{2\pi}(r-d)^{0.5+(r-d)}e^{-(r-d)} \sqrt{2\pi}\cdot n^{0.5+n} e^{-n}} \nonumber\\
&= \frac{e^2(n-d)^{0.5+(n-d)} r^{0.5+r}}{2\pi(r-d)^{0.5+(r-d)} n^{0.5+n}} \nonumber\\
&\le \frac{1.2 n^{0.5+(n-d)} r^{0.5+r}}{(r-d)^{0.5+(r-d)} n^{0.5+n}} \nonumber\\
&= 1.2 n^{-d} r^{0.5+r}(r-d)^{-(0.5+(r-d))} \nonumber\\
&= \left(1.2(r-d)^{-0.5}\right) n^{-d} r^{0.5+r}(r-d)^{-(r-d)} \nonumber\\
&\le n^{-d} r^{0.5+r}(r-d)^{-(r-d)}\label{eq:rAndD}\\
&= e^{-d\ln n +(0.5+r)\ln r -(r-d)\ln{(r-d)}}.\nonumber
\end{align}
\end{footnotesize}
The Equation~(\ref{eq:rAndD}) is true since $r>d$, and assuming that $r-d\ge 2$, we get: $1.2(r-d)^{-0.5}\le 1$.
Now recall that $d=\frac{0.03\varphi}{\ln n}$, and take $r=d\ln n=0.03\varphi$. Then:
\begin{footnotesize}
\begin{align}
p_1 &\le e^{-0.03\varphi +(0.5+0.03\varphi)\ln 0.03\varphi -\frac{0.03\varphi(\ln{n} -1)}{\ln{n}}\ln{\frac{0.03\varphi(\ln{n} -1)}{\ln{n}}}} \nonumber \\
&= e^{0.03\varphi\left( -1 + (\frac{0.5}{0.03\varphi}+1)\ln 0.03\varphi -(1-\frac{1}{\ln n})\ln{\frac{0.03\varphi(\ln{n} -1)}{\ln{n}}} \right)} \nonumber \\
&= e^{0.03\varphi\left( -1 + \frac{0.5\ln 0.03\varphi}{0.03\varphi}+\frac{\ln 0.03\varphi}{\ln n} -(1-\frac{1}{\ln n})\ln{\frac{\ln{n} -1}{\ln{n}}} \right)} \nonumber \\
&= e^{0.03\varphi\left( -1 + \frac{\ln 0.03\varphi}{0.1\varphi}+\frac{\ln 0.03\varphi}{\ln n} +(1-\frac{1}{\ln n})\ln{\frac{\ln{n}}{\ln{n}-1}} \right)}. \nonumber
\end{align}
\end{footnotesize}

Using the following Taylor expansion (for $a>1$):
\begin{align}
\ln{\frac{a}{a-1}}&=-\ln{\left(1-\frac{1}{a}\right)} \nonumber\\
&=\sum_{i=1}^{\infty}\frac{1}{ia^i} \nonumber\\
&=\frac{1}{a}+\frac{1}{2a^2}+\sum_{i=3}^{\infty}\frac{1}{ia^i} \nonumber\\
&\le \frac{1}{a}+\frac{1}{2a^2}+\frac{1}{2a^2} \nonumber\\
&\le \frac{1}{a}+\frac{1}{a^2}, \nonumber
\end{align}
\noindent we obtain
\begin{footnotesize}
\begin{align}
p_1 &\le e^{0.03\varphi\left( -1 + \frac{\ln 0.03\varphi}{0.1\varphi}+\frac{\ln 0.03\varphi}{\ln n} +(1-\frac{1}{\ln n}) \left(\frac{1}{\ln{n}}+\frac{1}{\ln^2{n}}\right) \right)} \nonumber \\
&= e^{\frac{0.03\varphi}{\ln n}\left( -\ln n + \frac{\ln 0.03\varphi \ln n}{0.1\varphi}+\ln 0.03\varphi +(\ln n - 1) \left(\frac{1}{\ln{n}}+\frac{1}{\ln^2{n}}\right) \right)} \nonumber \\
&= e^{\frac{0.03\varphi}{\ln n}\left( -\ln n + \frac{\ln 0.03\varphi \ln n}{0.1\varphi}+\ln 0.03\varphi + 1 + \frac{1}{\ln{n}}-\frac{1}{\ln{n}}-\frac{1}{\ln^2{n}} \right)}. \nonumber
\end{align}
\end{footnotesize}

Note that $\varphi\le n$ and thus for $\varphi\ge 10 \ln^4 n$,
$$\frac{\ln 0.03\varphi \ln n}{0.1\varphi} \le \frac{1}{\ln^2{n}}.$$
We can write:
\begin{align}
p_1 &\le e^{\frac{0.03\varphi}{\ln n}\left( -\ln n +\ln 0.03\varphi + 1  \right)} \nonumber\\
&\le e^{\frac{0.03\varphi}{\ln n}\left( -\ln n +\ln 0.03n + 1  \right)}\nonumber\\
&\le e^{\frac{0.03\varphi}{\ln n}\left(-2.5  \right)}\nonumber\\
&= e^{\frac{-0.075\varphi}{\ln n}}. \nonumber
\end{align}

Since $e^{-\frac{0.075\varphi}{\ln n}}\le e^{-\frac{\varphi}{80\log n}}$, we can use the union bound from the proof of Theorem \ref{thm:rand_strong_adv} (see Equation (\ref{eq:union_reuse}) in that proof): we conclude that for $\varphi\ge 82^2\log^4 n$, in order to create a load of $\sqrt{\varphi}$, a weak adversary will have to fail at least $r=d\ln n = 0.03 \varphi$ links with probability of at least $1-\frac{1}{n^2}$.
\hfill $\Box$\end{proof}

\subsection{Multiple Destinations}

We can also generalize the traffic model: instead of having a single destination, each node $v_i$ for all $i\in [1,\ldots,n]$ needs to send one unit of flow to each other node $v_j$ with $j\neq i$. We can adapt the $\rfs$ scheme in the sense that we generate $\binom{n}{2}$ random failover permutations, one for each source-destination pair. Note that this corresponds to a matrix representation $\delta_{\cdot,\cdot}$ with $\binom{n}{2}$ rows and $n-2$ columns. Again, the permutations ensure loop-freeness.

\begin{theorem}\label{thm:mult_dest}
Using the (generalized) $\rfs$ failover scheme, in order to load a node with $\varphi$ flows ($0<\varphi < n$), the adversary will have to fail $\Theta(\varphi)$ links.
\end{theorem}
\begin{proof}
In the generalized $\rfs$, we have $n(n-1)$ failover sequences: each set of $n-1$ sequences are for a specific destination. Consider now the first column of this $n(n-1)\times (n-2)$ matrix. Since there are $n(n-1)$ rows, there exists a node $w$ that appears at least $n-1$ times in the first column. Now, for each row $i$ in which $\delta_{i,1}=w$, the adversary can fail the link from $v_i$ to the appropriate destination, thus making the flow to go through $w$. Hence, a single failure is enough to bring an additional flow to $w$, i.e., the adversary needs at most $\varphi$ failures in order to create a load of $\varphi$ on $w$.

It remains to show that these failures cannot be reused by the adversary, i.e., the adversary indeed will need to invest $\varphi$ failures.
The argument is simple. In order to create a load of $\varphi$ on $w$, the adversary has to `finish' the prefixes of $\varphi$ sequences. In order to do so, it first needs to fail the direct $(v_1,v_n)$ link of every such sequence, i.e., for each sequence at least this link has to be failed. But these links are unique (each sequence serves a different source-destination pair) and thus cannot be reused. The last implies that the adversary indeed needs to invest at least $\varphi$ failures.
\hfill $\Box$\end{proof}

%\end{comment}

\end{appendix}

%
%\end{comment}

\end{document}